\newtheorem{theorem}{Theorem}
\newtheorem{corollary}{Corollary}
\newtheorem{remark}{Remark}
\newcommand{\ket}[1]{\left|#1\right\rangle}
\newcommand{\bra}[1]{\left\langle#1\right|}
\renewcommand{\vec}{\bm}
\newcommand{\Tr}{\operatorname{Tr}}
\newcommand{\A}{\bm A}
\renewcommand{\S}{\bm S}
\newcommand{\D}{\bm D}
\newcommand{\bPhi}{\bm \Phi}
\newcommand{\I}{\bm I}
\newcommand{\C}{\bm C}
\newcommand{\T}{\bm T}
\newcommand{\w}{\hat{\bm w}}
\renewcommand{\v}{\hat{\bm v}}
\newcommand{\e}{\hat{\bm e}}
\newcommand{\z}{\hat{\vec{z}}}
\newcommand{\Eta}{\hat{\bm \eta}}
\newcommand{\G}{\bm G}
\newcommand{\bDelta}{\bm \Delta}
\newcommand{\X}{\bm X}
\newcommand{\Z}{\bm Z}
\newcommand{\M}{\bm M}
\newcommand{\muSSV}{$\mu$SSV~}
\begin{document}
\title{Disturbance-agnostic robust performance with structured uncertainties and initial state error in classical versus quantum oscillatory systems}
\author{E.~Jonckheere, 
\IEEEmembership{Life Fellow, IEEE}
S.~G.~Schirmer, 
\IEEEmembership{Member, IEEE},
F.~C.~Langbein, 
\IEEEmembership{Member, IEEE},
C.~A.~Weidner, \IEEEmembership{Member, IEEE}, and 
S. O'Neil, \IEEEmembership{Member, IEEE}
\thanks{This research was supported in part by NSF Grant IRES-1829078.}
\thanks{EAJ and SON are with the Department of Electrical and Computer Engineering, University of Southern California, Los Angeles, CA 90089 USA (e-mail: jonckhee@usc.edu).}
\thanks{SGS is with the Faculty of Science \& Engineering, Swansea University, Swansea SA2 8PP, UK (e-mail: s.m.shermer@gmail.com).}
\thanks{FCL is with the School of Computer Science and Informatics, Cardiff University, Cardiff CF24 3AA, UK (e-mail: frank@langbein.org).}
\thanks{CAW is with the Quantum Engineering Technology Laboratories, University of Bristol, Bristol BS8 1FD, United Kingdom (e-mail: c.weidner@bristol.ac.uk).}
}
\maketitle

\begin{abstract}
A method to quantify robust performance for situations where structured parameter variations and initial state errors rather than extraneous disturbances are the main performance limiting factors is presented. The approach is based on the error dynamics, the difference between nominal and perturbed dynamics, driven by either the unperturbed or perturbed state, rather than an artificially imposed disturbance. The unperturbed versus perturbed dichotomy can be interpreted as the relative error dynamics scaled by either the unperturbed or perturbed dynamics. The error dynamics driven by unperturbed state has the unique feature of decoupling the effect of physically meaningful uncertainties from an additive disturbance. The perturbed case offers the possibility to side-step Structured Singular Value (SSV) computations. Applications to a lightly damped mechanical system and a slowly dephasing quantum system demonstrate the usefulness of the concepts across a broad range of systems.  Finally, a fixed-point algorithm specifically developed for quantum systems with state transitions depending in a nonlinear fashion on uncertainties is proposed as the substitute for classical SSV. 
\end{abstract}

\begin{IEEEkeywords}
Uncertain systems, robust control, H-infinity control, quantum control.
\end{IEEEkeywords}

\section{Introduction}\label{sec:intro}

Robust control is often formulated in terms of controlling parametrically uncertain Linear Time-Invariant (LTI) dynamics driven by extraneous disturbances. Robust performance can then be assessed by bounding the infinity-norm of the transmission of the disturbance to the output-to-be-controlled~\cite{Zhou,PackardDoyle,Safonov95,IQC}.
This raises the question whether the performance is limited by a physically present disturbance, or whether the disturbance is simply added as a convenient way to make the $\mu$ or Structured Singular Value (\muSSV) machinery applicable.  For example, the disturbances acting on a space structure are very light, even deterministic (gravity gradient), casting doubt on a robust performance based on the transmission of a ``disturbance" to some output~\cite{TRW}.  In addition, other potential sources of error such as in the initial state, not commonly considered in classical robustness, are relevant for certain applications.  This particularly applies to quantum control, where, e.g., the fidelity of quantum state transfer is degraded by uncertain initial state preparation errors~\cite{Carrie_Kosut}.  We propose an alternative robust performance formulation that is independent of extraneous disturbances, incorporates initial state errors quite naturally, and allows for LTI dynamics depending in a non-affine manner on the uncertain parameters.  

What distinguishes the present approach from classical robust performance is that the extraneous disturbance is suppressed by introducing a new state, the difference between the perturbed dynamics and the unperturbed dynamics.  The dynamics of this error has two different but equivalent state-space formulations, driven by either the unperturbed-dynamics state or the perturbed-dynamics state.  The annihilation of the extraneous disturbance allows the assessment of the performance solely in terms of the parametric uncertainties, independently of extraneous disturbances. Moreover, the frequency response of the error appears quite naturally scaled by the unperturbed or perturbed dynamics, leading to relative errors.  Computationally, the unperturbed-state scaling has the advantage of simplifying the frequency sweep typically required in \muSSV-analysis.  In the perturbed case, \muSSV-computations can be completely eliminated. The approach overcomes a general technicality, typical in the IQC signal analysis approach~\cite{IQC}, that the driving signals need to be in $L^2$. The ``disturbance-independent" approach was originally motivated by quantum control problems~\cite{robust_performance_open}.  Here, we broaden its remit to classical systems, indicate a way to sidestep the \muSSV-computations, and extend classical Linear Parametrically Varying (LPV) perturbations to nonlinear parametrically varying perturbations.

The paper is organized as follows. In Sec.~\ref{sec:error-dynamics}, the two formulations of the error dynamics relative to the perturbed and unperturbed system dynamics are presented. Robust performance with respect to the dynamics of the unperturbed system is developed in Sec.~\ref{sec:unperturbed}. In Sec.~\ref{sec:perturbed}, the performance is analyzed relative to the perturbed dynamics.  In Sec.~\ref{sec:mechanical_example} we apply the results to a simple mechanical system, and in Sec.~\ref{sec:quantum_examples} we consider a quantum mechanical example.

The present paper is an outgrowth of~\cite{robust_performance_open}, which was dedicated to open quantum systems. Here, we broaden the range of its applications to include classical systems, stressing a classical-quantum discrepancy.  

\section{Relative error dynamics}\label{sec:error-dynamics}

The system dynamics resulting from the model are referred to as the \emph{unperturbed} system dynamics. The dynamics of the real system, which deviates from the model dynamics, is referred to as the \emph{perturbed} system dynamics. Formulating the dynamics in terms of the controllable state-space representations, where $\vec{r}_u(t)$ is the state of the unperturbed system (following nominal model dynamics), $\vec{r}_p(t)$ is the state of the perturbed system, and $\vec{d}$ denotes external disturbance that could be either a stochastic noise, a $L^2$-bounded, even an $L^\infty$-bounded signal, e.g., a unit step, we assume:
\begin{subequations}
\begin{align}
\begin{split}
\tfrac{d}{dt}\vec{r}_u &= \A \vec{r}_u + {\bm B} \vec{d}, \qquad\qquad \vec{r}_u(0)=\vec{r}_{u,0},\\
y_u &= \C_u \vec{r}_u;
\end{split} \label{eq:r_unperturbed}\\[1ex]
\begin{split}
\tfrac{d}{dt}\vec{r}_p &= (\A+\delta \S) \vec{r}_p + {\bm B} \vec{d}, \quad \vec{r}_p(0)=\vec{r}_{p,0},\\
y_p &= \C_p \vec{r}_p.
\end{split}\label{eq:r_perturbed}
\end{align}
\end{subequations}
The multivariable controllable canonical form freezes the $\bm B$-matrix, so that the uncertainties are lumped into $\A$ and $\C$. The uncertainty in $\A$ is assumed to be structured with \emph{structure} $\S$ and \emph{magnitude} $\delta$, while the uncertainty on $\C$ is left unstructured for now. A slight difficulty with the utilization of a controllable canonical form is that the same uncertain parameter could appear in both $\A$ and $\C$, 
but this is easily dealt with as argued in Corollary~\ref{cor:SsameasSc}. State or output feedback to achieve specifications are assumed to be embedded in $\A$. In other words, Eqs.~\eqref{eq:r_unperturbed} and~\eqref{eq:r_perturbed} represent the \emph{closed-loop} unperturbed and perturbed systems, resp., designed by any means, including  learning~\cite{PRA} and competitive analysis~\cite{CompetitiveControl}. The output error $\vec{e}(t):=\C_p\vec{r}_p-\C_u\vec{r}_u$ admits two different, but equivalent, $\vec{d}$-agnostic state-space realizations, both of which have state $\vec{z}(t):=\vec{r}_p(t)-\vec{r}_u(t)$:
\begin{subequations}
\begin{align}\label{eq:z_unperturbed}
\begin{split}
\tfrac{d}{dt}\vec{z} &= (\A+\delta \S) \vec{z} + \delta \S \vec{w}_u, \qquad \vec{z}(0)=\vec{z}_0,\\
\vec{e} &=\C_p\vec{z} +(\C_p-\C_u)\vec{w}_u;
\end{split}\\[1ex]
\label{eq:z_perturbed}
\begin{split}
\tfrac{d}{dt}\vec{z} &= \A\vec{z} + \delta \S\vec{w}_p, \qquad \qquad\quad\; \vec{z}(0)=\vec{z}_0,\\
\vec{e} &=\C_u \vec{z}+(\C_p-\C_u)\vec{w}_p.
\end{split}
\end{align}
\end{subequations}
The difference between the two models is in the driving terms: $\vec{w}_u:=\vec{r}_u$ for the \emph{unperturbed} error dynamics~\eqref{eq:z_unperturbed} and $\vec{w}_p:=\vec{r}_p$ for the \emph{perturbed} error dynamics~\eqref{eq:z_perturbed}. This entails different sources of potential instabilities: in the unperturbed formulation of Eq.~\eqref{eq:z_unperturbed}, assuming $\vec{w}_u \in L^2$, the instability is in the free dynamics $\A+\delta \S$; while in the perturbed case of Eq.~\eqref{eq:z_perturbed}, the instability is in the input term $\vec{w}_p$, which need not be in $L^2$ unless the perturbed dynamics is kept stable.  

The state-space error dynamics gives the transfer matrices
\begin{subequations}
  \begin{align}
    \T^u_{\vec{e},\vec{w}_u}(s,\delta)&:= (\C_p-\C_u)+\C_p(s\I-\A-\delta \S)^{-1}\delta \S\label{eq:Tu_error},\\
    \T^p_{\vec{e},\vec{w}_p}(s,\delta)&:=(\C_p-\C_u)+\C_u(s\I-\A)^{-1}\delta \S,\label{eq:Tp_error}
  \end{align}
\end{subequations}
where we have adopted the universal notation of $\T_{\vec{y},\vec{x}}$ to denote the transfer matrix from $\hat{\vec{x}}$ to $\hat{\vec{y}}$, and $\hat{\vec{x}}(s)$ denotes the Laplace transform of $\vec{x}(t)$. 

Application of the matrix inversion lemma reveals that $\T^u$ is the transfer function for the error scaled relative to the unperturbed dynamics, while $\T^p$ is scaled relative to the perturbed dynamics, as easily seen from
\begin{subequations}\label{eq:freq_weighing}
  \begin{align}
    \T^u_{\vec{e},\vec{w}_u}(s,\delta) = & \left[\C_p(s\I-\A-\delta \S)^{-1}- \C_u(s\I-\A)^{-1}\right] \nonumber\\
    & \cdot\left[(s\I-\A)^{-1}\right]^{-1}\label{eq:Tu_error_scaled},\\
    \T^p_{\vec{e},\vec{w}_p}(s,\delta) = &\left[\C_p(s\I-\A-\delta \S)^{-1} -\C_u(s\I-\A)^{-1} \right] \nonumber\\
    &\cdot\left[(s\I-\A-\delta \S)^{-1}\right]^{-1}.\label{eq:Tp_error_scaled}
\end{align}
\end{subequations}

Moreover, Eqs.~\eqref{eq:Tu_error_scaled} and~\eqref{eq:Tp_error_scaled} reveal that the two models differ by a frequency correction factor:
\begin{equation}\label{eq:equate}
  \T^p_{\vec{e},\vec{w}_p}(s,\delta)
  = \T^u_{\vec{e},\vec{w}_u}(s,\delta)\left((s\I-\A)^{-1}(s\I-\A-\delta \S) \right).
\end{equation}

Expansion with respect to the unperturbed state (case a) makes sense as we are expressing the scaled deviation of the dynamics (as captured by the error transfer function) in terms of something we know. Expansion with respect to the perturbed state (case b), however, is also interesting in that we are expressing how much the dynamics of the system deviates from the presumed dynamics in terms of the \emph{actual} state of the system.  This is a more relevant formulation if we can probe the actual dynamics of the system experimentally. For a historical review of this unperturbed vs. perturbed scaling dichotomy see~\cite{Safonov_Laub_Hartmann}.

\begin{remark} A direct feedthrough $\D\bm d$ could be added to the right-hand sides of the $y_{u},y_p$ terms in Eqs.~\eqref{eq:r_unperturbed}-\eqref{eq:r_perturbed}. If $\D$ is not subject to uncertainties, subtracting Eq.~\eqref{eq:r_unperturbed} from Eq.~\eqref{eq:r_perturbed} yields the same Eqs.~\eqref{eq:z_unperturbed}-\eqref{eq:z_perturbed}. However, a departure $\D_p$ from its unperturbed value $\D_u$ creates an extra term $(\D_p-\D_u)\bm d$ in the right-hand sides of the $\bm e$-terms in Eqs.~\eqref{eq:z_unperturbed}-\eqref{eq:z_perturbed}. Since the two disturbances $\bm w_{u/p}$ and $\bm d$ are decoupled, their respective effects can be treated separately: the former follows the lines of the paper and the latter is trivial as a direct transmission. \\
\end{remark}

\subsection{Contribution}\label{s:contribution}

What distinguishes this approach from {\it ``classical"} robust performance, as represented by~\cite{Zhou, PackardDoyle, Young2001StructuredSV, Tempo2013UncertainLS, Safonov84, Safonov89, Safonov95} is that the former deals with Eqs.~\eqref{eq:r_unperturbed}--\eqref{eq:r_perturbed}, while we focus on Eqs.~\eqref{eq:z_unperturbed}--\eqref{eq:z_perturbed} and carry out robust performance on Eqs.~\eqref{eq:Tu_error_scaled} and \eqref{eq:Tp_error_scaled}, rather than Eq.~\eqref{eq:r_perturbed}.  Put another way, classical robust performance bounds $\|\T_{y_p,\vec{d}}(s,\delta)\|_\infty$ whereas we bound $\|\T^{u}_{y_p-y_u,\vec{w}_{u}}(s,\delta)\|_\infty$ or $\|\T^{p}_{y_p-y_u,\vec{w}_{p}}(s,\delta)\|_\infty$.   An advantage of using $\|\T^{u/p}_{y_p-y_u,\vec{w}_{u/p}}(s,\delta)\|_\infty$ in the unperturbed or perturbed ($u/p$) case is that it allows an objective assessment of the effect of the uncertainties, rather than their effect on the transmission of an arbitrary external disturbance $\vec{d}$.  Both the unperturbed and perturbed formulations are driven by \emph{physically objective} driving terms: the states of the unperturbed and the perturbed dynamics, resp.  If an additive disturbance$\vec{d}$ is physically justifiable, the same approach decouples the effect of the disturbance and allows performance to be assessed solely as a consequence of uncertain parameters.

For oscillatory systems, such as lightly damped space structures~\cite{TRW,symmetric_passive}, or weakly decoherent quantum systems~\cite{rings_QINP, chains_QINP}, the unperturbed error dynamics has the advantage that the frequency sweep can be limited to the known eigenfrequencies of $\A$. The perturbed error dynamics still requires the classical frequency sweep because the eigenfrequencies of $\A+\delta \S$ are imprecisely known, but
in this case the classical \muSSV-computations can be completely circumvented.

\section{Robust performance in the unperturbed formulation} \label{sec:unperturbed}

This section proceeds from Eq.~\eqref{eq:z_unperturbed} with an \emph{unperturbed} driving term $\vec{\hat{w}}_u$. We begin with the simplest case of zero initial state error and exact knowledge of $C$ with all plant parameter errors lumped into $\A$. We then introduce the most general case of the unperturbed dynamics including an uncertain initial state and uncertain sensor matrix. The cases of particular combinations of uncertainties in the initial state or sensor matrix are then deduced.

\subsection{From zero initial condition error to generalization}\label{sec:ewu}

Elementary matrix manipulation reveals that the transfer matrix $\T^u_{\vec{e},\vec{w}_u}$ in the formulation of Eq.~\eqref{eq:Tu_error} is obtained from
\begin{equation}\label{eq:Gewu}
  \begin{bmatrix} \v\\\e(s)\end{bmatrix}
    =\underbrace{\begin{bmatrix}
    \bPhi(s)^{-1}\S & \bPhi(s)^{-1}\S\\\C_p & \C_p-\C_u \end{bmatrix}}_{\G_{\vec{e},\vec{w}_u}(s)}
    \begin{bmatrix} \Eta \\ \w_u(s)\end{bmatrix},\;
\end{equation}
with $\bPhi(s)=s\I-\A$ and feedback $\Eta=(\delta \I) \v$ so that $\T^u_{\vec{e},\vec{w}_u}(s,\delta) = \mathcal{F}_u(\G_{\vec{e},\vec{w}_u}(s),\delta \vec{I})$, the upper linear fractional transformation $\G_{\vec{e},\vec{w_u}}(s)$ with the uncertainty $\delta \vec{I}$ \cite{Zhou}. With this feedback, we compute $\|\T^u_{\vec{e},\vec{w}_u}(s,\delta)\|$ via a \emph{fictitious} feedback $\w_u = \Delta_f \e$, with $\Delta_f$ a complex fully populated matrix. Specifically, a simple singular value argument shows that
\begin{equation*}
  \|\T^u_{\vec{e},\vec{w}_u}(s,\delta)\|=1/\min\{\|\Delta_f\|:
  \det(\I-\T^u_{\vec{e},\vec{w}_u}(s,\delta)\Delta_f)=0\}.
\end{equation*}
To compute $\max_\delta \|\T^u_{\vec{e},\vec{w}_u}(s,\delta)\|$, subject to closed-loop stability, the two feedbacks are combined as
\begin{equation}\label{eq:boldDeltaIIIA}
\begin{bmatrix}
\Eta\\ \w_u
\end{bmatrix}=
\underbrace{\begin{bmatrix}\delta \I & 0\\0 & \bDelta_f\end{bmatrix}}_{\bDelta (\delta,\Delta_f)}
\begin{bmatrix}
\v \\ \e
\end{bmatrix},
\end{equation}
where $\bDelta$ defines the \emph{structure} $\mathcal{D}$, i.e., the set of all block-diagonal matrices with the top left hand corner a real scalar matrix and the bottom left hand corner a fully populated complex matrix. It is readily verified that
\begin{equation}\label{eq:2factors}
\begin{split}
& \det(\I-\G_{\vec{e},\vec{w}_u}(s)\bDelta(\delta,\bDelta_f))=\\
& \quad \det(\I-\bPhi^{-1}\S\delta)\det(\I-\T^u_{\vec{e},\vec{w}_u}(s,\delta)\bDelta_f),
\end{split}
\end{equation}
where the first factor relates to the closed-loop stability, as $\det(\I-\bPhi^{-1}\S\delta) = \det(\bPhi^{-1})\det(s\I-\A-\S\delta)$. 
With the $\G_{\vec{e},\vec{w}_u}(s)$ matrix properly constructed, 
invoking the robust performance theorem~\cite[Th. 10.8]{Zhou} immediately yields
\begin{theorem}\label{th:IIIA}
If $\bPhi(s)$ is invertible in the right-half complex plane (RHP), then $\|\T^u_{\vec{e},\vec{w}_u}(s,\delta)\|\leq \mu_{\mathcal{D}}(\G_{\vec{e},\vec{w}_u}(s)), \forall$ $\delta < 1/\mu_{\mathcal{D}}(\G_{\vec{e},\vec{w}_u}(s))$, where
\begin{equation*}
  \mu_{\mathcal{D}}(\G_{\vec{e},\vec{w}_u}(s))=
  \left[\min_{\bDelta \in \mathcal{D}} \{\|\bDelta\| : \det(\I-\G_{\vec{e},\vec{w}_u}(s) \bDelta)=0 \}\right]^{-1}
\end{equation*}
is the structured singular value specific to $\mathcal{D}=\left\{\bDelta (\delta,\Delta_f) \right\}$.$\blacksquare$
\end{theorem}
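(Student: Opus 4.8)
The plan is to read Theorem~\ref{th:IIIA} as a statement at a fixed $s$ and obtain it from three facts already in place: the definition of $\mu_{\mathcal{D}}$ as the reciprocal of the norm of the smallest $\bDelta\in\mathcal{D}$ with $\det(\I-\G_{\vec{e},\vec{w}_u}(s)\bDelta)=0$; the two-factor determinant identity~\eqref{eq:2factors}; and the singular-value formula $\|\T^u_{\vec{e},\vec{w}_u}(s,\delta)\|=1/\min\{\|\Delta_f\|:\det(\I-\T^u_{\vec{e},\vec{w}_u}(s,\delta)\Delta_f)=0\}$. Fix $s$ with $\bPhi(s)=s\I-\A$ invertible, abbreviate $\G:=\G_{\vec{e},\vec{w}_u}(s)$, $\mu:=\mu_{\mathcal{D}}(\G)$, and recall that on $\mathcal{D}=\{\bDelta(\delta,\Delta_f)\}$ the norm is $\|\bDelta(\delta,\Delta_f)\|=\max(|\delta|,\|\Delta_f\|)$.

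First I would verify that the ``stability factor'' does not vanish for $|\delta|<1/\mu$, which is also what makes $\T^u_{\vec{e},\vec{w}_u}(s,\delta)$ well defined at $s$. Since $\bDelta(\delta,0)\in\mathcal{D}$ and $\|\bDelta(\delta,0)\|=|\delta|<1/\mu$, the definition of $\mu$ gives $\det(\I-\G\bDelta(\delta,0))\ne0$; putting $\Delta_f=0$ in~\eqref{eq:2factors} collapses this to $\det(\I-\bPhi^{-1}\S\delta)\ne0$, equivalently $\det(s\I-\A-\delta\S)\ne0$ because $\det(\I-\bPhi^{-1}\S\delta)=\det(\bPhi^{-1})\det(s\I-\A-\delta\S)$ with $\det(\bPhi^{-1})\ne0$. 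This is precisely where the hypothesis enters: invertibility of $\bPhi$ in the RHP makes $\delta=0$ admissible, guarantees well-posedness of the interconnection~\eqref{eq:Gewu}, and ensures $s$ is not a pole of the perturbed error dynamics $\A+\delta\S$.

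Next I would argue by contradiction on the performance factor. Suppose some $\Delta_f$ with $\|\Delta_f\|<1/\mu$ had $\det(\I-\T^u_{\vec{e},\vec{w}_u}(s,\delta)\Delta_f)=0$. Then $\bDelta(\delta,\Delta_f)\in\mathcal{D}$ with $\|\bDelta(\delta,\Delta_f)\|=\max(|\delta|,\|\Delta_f\|)<1/\mu$, yet~\eqref{eq:2factors} together with the nonvanishing of the first factor just proved forces $\det(\I-\G\bDelta(\delta,\Delta_f))=0$, contradicting the definition of $\mu$. Hence every $\Delta_f$ annihilating $\det(\I-\T^u_{\vec{e},\vec{w}_u}(s,\delta)\Delta_f)$ obeys $\|\Delta_f\|\ge1/\mu$, so the minimum in the singular-value formula is at least $1/\mu$ and $\|\T^u_{\vec{e},\vec{w}_u}(s,\delta)\|\le\mu$ for all $|\delta|<1/\mu$, which is the claim. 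This argument is exactly the content of the robust-performance theorem~\cite[Th. 10.8]{Zhou}: the matrix $\G$ was assembled in~\eqref{eq:Gewu} so that $\T^u_{\vec{e},\vec{w}_u}(s,\delta)=\mathcal{F}_u(\G,\delta\I)$ and so that augmenting $\delta\I$ by the fictitious full-complex performance block $\Delta_f$ realizes the standard $\mu$ upper bound on the LFT; one may therefore simply cite that theorem once the hypotheses below are checked.

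I expect the real work to be bookkeeping rather than logic: checking that $\G$ is built so that~\eqref{eq:2factors} holds with its first factor genuinely proportional to the closed-loop characteristic polynomial $\det(s\I-\A-\delta\S)$ (this rests on the matrix-inversion-lemma identities underlying~\eqref{eq:Tu_error}--\eqref{eq:Gewu}), and keeping straight that $\mathcal{D}$ has a real-scalar block $\delta\I$ and a full-complex block $\Delta_f$ so that the ball over which $\Delta_f$ is allowed to range coincides exactly with the one in the definition of $\mu_{\mathcal{D}}(\G)$. If instead one quotes~\cite[Th. 10.8]{Zhou} verbatim, the obstacle shifts to showing that its standing assumptions --- nominal well-posedness, internal stability of the nominal loop, and inclusion of the performance channel in the uncertainty structure --- all follow from ``$\bPhi(s)$ invertible in the RHP.''
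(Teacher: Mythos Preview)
Your proposal is correct and follows the same route as the paper: the paper constructs $\G_{\vec{e},\vec{w}_u}$ via~\eqref{eq:Gewu}, records the factorization~\eqref{eq:2factors}, and then simply invokes the robust performance theorem~\cite[Th.~10.8]{Zhou} to conclude. You do the same, additionally unpacking that citation into the explicit stability-factor/performance-factor contradiction argument, which is a faithful expansion rather than a different method.
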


The above is a generic result that can easily be extended to more complicated uncertainty patterns. Given a transfer matrix $\T_{\vec{e},\vec{w}_g}(s,\delta_g)$ from a generalized disturbance to some error, subject to a generalized structured uncertainty of magnitude $\delta_g$, the difficulty is to find a matrix $\G_g$ and a structured feedback $\bDelta_g$, such that wrapping the structured feedback $\Eta_g=\bDelta_g \v_g$ around
\[
\begin{bmatrix} \v_g \\ \e \end{bmatrix} = \G_g \begin{bmatrix}\Eta_g\\ \w_g \end{bmatrix},
\]
reproduces $\T_{\vec{e},\vec{w}_g}(s,\delta_g) = \mathcal{F}_{u}(\G_{\vec{g}(s)},\vec{\Delta_g})$. In the subsections that follow, we derive $\G_g$ and $\bDelta_g$ and bound $\|\T_{\vec{e},\vec{w}_g}(s,\delta_g)\|$ without repeating the argument of Th.~\ref{th:IIIA} with the objective of bounding $\T_{\vec{e},\vec{w}_g}(s,\delta_g)$ with
\begin{equation*}
  \mu_{\mathcal{D}}(\G_g(s)) = \left[\min_{\bDelta \in \mathcal{D}}\{\|\bDelta\|: \det(\I-\G_g\bDelta)=0\} \right]^{-1},
\end{equation*}
where $\mathcal{D}=\{\mathrm{block}\mbox{-}\mathrm{diag}(\bDelta_g,\bDelta_f)\}$.

\subsection{Effect of initial state preparation error with uncertain $\C$}\label{ss: uncertain_C_and_z(0)}
With an understanding of how to derive $\G_{g}$ and the associated feedback structure in the simple case of Section~\ref{sec:ewu}, we establish the general case of unperturbed dynamics with an uncertain $C$ matrix and initial state error. Model the uncertainty in $C$ as 
 \begin{equation}
 \C_p = \C_u + \delta_c \vec{S_c}.
 \end{equation}
 Then taking $z(0) \neq 0$, ~\eqref{eq:z_unperturbed} yields
 \begin{align*}
     &\T_{\vec{e},(z(0),\vec{w_u})} = \\& \begin{bmatrix} \left( \vec{C_u} + \delta_c \vec{I_c} \right) \left(\vec{I} - \vec{\Phi^{-1}} \vec{S} \delta \right)^{-1}\vec{\Phi^{-1}} \nonumber\\ \left( \left( \vec{C_u} + \delta_c \vec{I_c} \right) \left( \vec{I} - \vec{\Phi^{-1}}\vec{S} \delta \right)^{-1}\vec{\Phi^{-1}} \vec{S} \delta + \delta_c \vec{S_{c}} \right)\end{bmatrix}^T
 \end{align*}
 with input $\left[ z(0) \ \vec{w_u} \right]^{T}$. While the structure of $\G^c_{\vec{e},(\vec{z}(0),\vec{w}_u)}$ is not readily apparent from this equation, we can ``pull out'' the uncertainties $\delta$ and $\delta_c$ as depicted in Figure~\ref{fig:blockdiagram}.
 \begin{figure}
    \centering
    \includegraphics[trim=20 35 350 30, clip,width=\columnwidth]{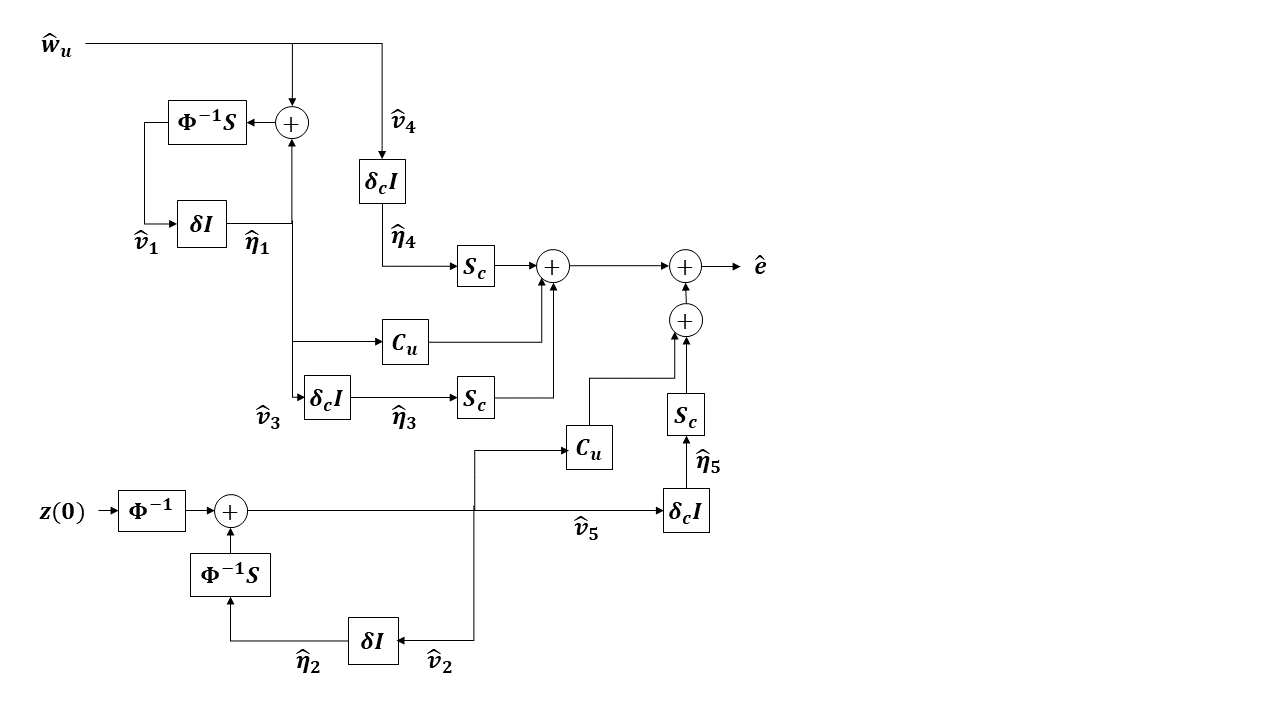}
    \caption{Diagram for error dynamics driven by unperturbed state in the most general case: with initial state preparation error and uncertain $C$.}
    \label{fig:blockdiagram}
\end{figure}
Associating the system inputs $\vec{\hat{\eta}}_n$ from each uncertainty block  $\delta \vec{I}$ or $\delta_c \vec{I_c}$ with the output $\vec{\hat{v}}_n$ results in $\vec{\hat{\eta}}=\vec{\Delta_g} \vec{\hat{v}}$ with 
\begin{equation*}
  \vec{\Delta_g}
  =
  \left[\begin{array}{ccccc}
    \delta \I & 0 & 0 & 0 & 0 \\
    0 & \delta \I & 0 & 0 & 0\\
    0 & 0 & \delta_c \I & 0 & 0\\
    0 & 0 & 0 & \delta_c \I & 0\\
    0 & 0 & 0 & 0 & \delta_c \I
  \end{array}\right].
\end{equation*}
Writing each output $\hat{\vec{v}}_n$ and $\vec{\hat{e}}$ in terms of the inputs $\vec{\hat{\eta}}_n$, $\z(0)$ and $\vec{\hat{w}}_u$ yields
\begin{align}\label{eq:Gez0wuC}
& \left[ \begin{array}{c} \vec{\hat{v}} \\ \hline \vec{\hat{e}} \end{array} \right] = \left[ \begin{array}{c|c} \vec{G_{11}} & \vec{G_{12}} \\ \hline \vec{G_{21}} & \vec{G_{22}}\end{array} \right] \left[ \begin{array}{c} \vec{\hat{\eta}} \\ \hline z(0) \\ \vec{\hat{w}}_u \end{array} \right] = \\&
  \begin{bmatrix}
    \v_1 \\ \v_2 \\ \v_3 \\ \v_4 \\ \v_5 \\ \cmidrule(lr){1-1} \e \end{bmatrix}
   \! = \! \underbrace{\left[\begin{array}{ccccc|cc}
    \X & 0 & 0 & 0 & 0 & 0& \X\\
    0 & \X &0&0&0&\bPhi^{-1} & 0 \\
    \I & 0 & 0 & 0 & 0 & 0 & 0\\
    0 & 0 & 0 & 0 & 0 & 0 &\I\\
    0 & \X & 0 & 0 & 0 & \bPhi^{-1} & 0\\\hline
    \C_u & \C_u \X & \S_c & \S_c & \S_c & \C_u\bPhi^{-1} & 0
  \end{array}\right]}_{G^c_{\vec{e},(\vec{z}(0),\vec{w}_u)}} \!
  \begin{bmatrix}
    \Eta_1 \\ \Eta_2 \\ \Eta_3 \\ \Eta_4 \\ \Eta_5 \\ \cmidrule(lr){1-1} \vec{z}(0) \\ \w_u
  \end{bmatrix} \notag
\end{align}
from which $\G^{c}_{\vec{e},(z(0),\vec{w_u})}$ is readily identified. Here $\vec{X} = \vec{\Phi}^{-1}\vec{S}$. Proceeding as in Section~\ref{sec:ewu}, we close the loop from $\vec{\hat{e}}$ to $\left[ z(0) \ \vec{\hat{w}}_u \right]^T$ with the full, fictitious feedback $\vec{\Delta}_f$ to produce the overall feedback $\bDelta(\delta, \delta, \delta_c, \delta_c, \delta_c, \bDelta_f)$. 
Consequently, proceeding 
as in Theorem~\ref{th:IIIA} and by the robust performance theorem~\cite[Th. 10.8]{Zhou} we have
\begin{theorem}\label{th:IIID} 
If $\Phi(s)$ is invertible in the RHP,  setting $\mathcal{D}=\left\{\bDelta(\delta, \delta, \delta_c, \delta_c, \delta_c, \bDelta_f) \right\}$, we have
  \begin{equation*}
    \|\T^u_{\vec{e},(z(0),\vec{w}_u)}(s,\delta,\delta_c)\|
    \leq\mu_{\mathcal{D}}(\G_{\vec{e},(z(0),\vec{w}_u)}^c(s))
  \end{equation*}
  for $\delta,\delta_c<1/\mu_{\mathcal{D}}(\G_{\vec{e},(z(0),\vec{w}_u)}^c(s))$. $\blacksquare$
\end{theorem}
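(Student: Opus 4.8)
\emph{Proof outline.} The strategy is to re-run the three steps behind Theorem~\ref{th:IIIA}, with the extra bookkeeping demanded by the five structured blocks $\bDelta_g:=\diag(\delta\I,\delta\I,\delta_c\I,\delta_c\I,\delta_c\I)$ pulled out in Figure~\ref{fig:blockdiagram} (here $\delta$ and $\delta_c$ each appear as a repeated real scalar block, which is exactly the structure $\mathcal{D}$). Namely: (i) exhibit $\T^u_{\vec{e},(z(0),\vec{w}_u)}(s,\delta,\delta_c)$ as the upper LFT $\mathcal{F}_u(\G^c_{\vec{e},(z(0),\vec{w}_u)}(s),\bDelta_g)$ of the realization of Eq.~\eqref{eq:Gez0wuC}; (ii) append the fictitious, fully-populated complex feedback $[\vec{z}(0)\;\w_u]^{T}=\bDelta_f\e$, so the total feedback is the block-diagonal $\bDelta(\delta,\delta,\delta_c,\delta_c,\delta_c,\bDelta_f)\in\mathcal{D}$, and record the singular-value identity $\|\T^u_{\vec{e},(z(0),\vec{w}_u)}\|=1/\min\{\|\bDelta_f\|:\det(\I-\T^u_{\vec{e},(z(0),\vec{w}_u)}\bDelta_f)=0\}$ exactly as before Theorem~\ref{th:IIIA}; (iii) factor $\det(\I-\G^c\bDelta)$ into a closed-loop-stability factor and the performance factor $\det(\I-\T^u_{\vec{e},(z(0),\vec{w}_u)}\bDelta_f)$; and (iv) quote the robust-performance theorem \cite[Th.~10.8]{Zhou}.

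Step (i) is a signal trace through Eq.~\eqref{eq:Gez0wuC}. Closing the two $\delta$-loops gives, with $\X=\bPhi^{-1}\S$, the relations $\v_1=(\I-\delta\X)^{-1}\X\w_u$ and $\v_2=\v_5=(\I-\delta\X)^{-1}\bPhi^{-1}\vec{z}(0)$, while the three $\delta_c$-loops are memoryless, $\v_3=\delta\v_1$ and $\v_4=\w_u$; substituting these into the $\e$-row and using $\I-\delta\X=\bPhi^{-1}(s\I-\A-\delta\S)$ together with $\C_p-\C_u=\delta_c\S_c$ collapses $\e$ to precisely the two-block transfer matrix displayed just above Eq.~\eqref{eq:Gez0wuC}. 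I expect this to be the main obstacle: although the algebra is elementary, one must confirm that the duplicated row $\v_2=\v_5$ and the placement of the three $\S_c$ blocks make $\G^c$ ``properly constructed'' in the sense used before Theorem~\ref{th:IIIA}, so that the LFT is reproduced \emph{and} the determinant below factors cleanly.

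For step (iii), partition $\G^c$ into the blocks $\G_{11},\G_{12},\G_{21},\G_{22}$ of Eq.~\eqref{eq:Gez0wuC}. Reading off that matrix, $\I-\G_{11}\bDelta_g$ is block lower-triangular with diagonal blocks $\I-\delta\X,\,\I-\delta\X,\,\I,\,\I,\,\I$, so that $\det(\I-\G_{11}\bDelta_g)=\det(\I-\delta\X)^{2}=\bigl(\det(\bPhi^{-1})\det(s\I-\A-\delta\S)\bigr)^{2}$; the three trivial factors arise because $\S_c$ feeds only the $\e$-row and never returns to any $\v_n$, i.e.\ a sensor-matrix perturbation cannot destabilise the loop. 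The Schur-complement identity for LFTs then gives the analogue of Eq.~\eqref{eq:2factors}:
\begin{equation*}
\det\bigl(\I-\G^c_{\vec{e},(z(0),\vec{w}_u)}(s)\,\bDelta\bigr)=\det(\I-\delta\X)^{2}\,\det\bigl(\I-\T^u_{\vec{e},(z(0),\vec{w}_u)}(s,\delta,\delta_c)\,\bDelta_f\bigr).
\end{equation*}
Its first factor is, up to the nonvanishing $\det(\bPhi^{-1})$, a power of the closed-loop characteristic function $\det(s\I-\A-\delta\S)$.

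Finally, with $\bPhi(s)$ invertible in the RHP and $\delta,\delta_c<1/\mu_{\mathcal{D}}(\G^c_{\vec{e},(z(0),\vec{w}_u)}(s))$, the structured choice $\diag(\delta\I,\delta\I,0,0,0,0)\in\mathcal{D}$ already forces the first factor to be nonzero, so $\T^u_{\vec{e},(z(0),\vec{w}_u)}$ is well defined and the $\delta$-perturbed dynamics is RHP-stable; then, exactly as in the proof of Theorem~\ref{th:IIIA}, \cite[Th.~10.8]{Zhou} applied to the displayed factorization yields $\|\T^u_{\vec{e},(z(0),\vec{w}_u)}(s,\delta,\delta_c)\|\le\mu_{\mathcal{D}}(\G^c_{\vec{e},(z(0),\vec{w}_u)}(s))$ for $\delta,\delta_c<1/\mu_{\mathcal{D}}(\G^c_{\vec{e},(z(0),\vec{w}_u)}(s))$, which is the claim.
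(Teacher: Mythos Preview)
Your proposal is correct and follows exactly the approach the paper indicates: the paper's own argument for Theorem~\ref{th:IIID} is nothing more than ``proceeding as in Theorem~\ref{th:IIIA} and by the robust performance theorem~\cite[Th.~10.8]{Zhou},'' and you have spelled out precisely those steps (LFT reconstruction from Eq.~\eqref{eq:Gez0wuC}, the determinant factorization analogous to Eq.~\eqref{eq:2factors}, and the appeal to \cite[Th.~10.8]{Zhou}). Your verification that $\I-\G_{11}\bDelta_g$ is block lower-triangular with $\det(\I-\delta\X)^2$ on the diagonal, and your signal trace recovering $\T^u_{\vec{e},(z(0),\vec{w}_u)}$, are the details the paper leaves implicit.
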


\begin{corollary}\label{cor:SsameasSc}
If the same uncertainy amplitude parameter $\delta$ appears in both $A$ and $C$ with uncertainty structure $S$ and $S_c$, resp.,  
Th.~\ref{th:IIIA} remains valid after equating $\delta$ and $\delta_c$ resulting in the overall feedback is $\vec{\Delta}(\delta,\delta,\delta,\delta,\delta,\vec{\Delta}_f)$. $\blacksquare$
\end{corollary}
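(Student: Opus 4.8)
The plan is to observe that the linear-fractional construction behind $\G^{c}_{\vec e,(z(0),\vec w_u)}(s)$ in Eq.~\eqref{eq:Gez0wuC} is a purely \emph{structural} operation: the interconnection matrix is assembled by ``pulling out'' each occurrence of an uncertain gain, and its entries involve only $\A$, $\S$, $\S_c$, $\C_u$ and $\bPhi(s)$, never the scalar \emph{values} $\delta$ and $\delta_c$. Hence $\G^{c}$ is untouched when the physical parameter entering $\A$ coincides with the one entering $\C$; all that changes is the \emph{shape} of the feedback wrapped around it. First I would record the identity, valid pointwise in the parameters and already established in the course of Th.~\ref{th:IIID}, that $\T^{u}_{\vec e,(z(0),\vec w_u)}(s,\delta,\delta_c)=\mathcal{F}_u\big(\G^{c}_{\vec e,(z(0),\vec w_u)}(s),\,\bDelta(\delta,\delta,\delta_c,\delta_c,\delta_c)\big)$; specializing to $\delta_c=\delta$ immediately gives $\T^{u}_{\vec e,(z(0),\vec w_u)}(s,\delta,\delta)=\mathcal{F}_u\big(\G^{c}(s),\,\bDelta(\delta,\delta,\delta,\delta,\delta)\big)$, so the fixed-$\delta$ transfer matrix is still an upper LFT of the \emph{same} plant $\G^{c}$.

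Next I would check that folding the two $\delta$-blocks together with the three $\delta_c$-blocks into one repeated real scalar block keeps us inside the admissible $\mu$-framework of~\cite[Th.~10.8]{Zhou}: the specialized uncertainty set $\mathcal{D}'=\{\bDelta(\delta,\delta,\delta,\delta,\delta,\bDelta_f)\}$ is still block-diagonal, now with a single repeated real scalar block (combining the five former blocks) plus the full complex block $\bDelta_f$ --- a legitimate structure. Since $\mathcal{D}'\subseteq\mathcal{D}$, one has $\mu_{\mathcal{D}'}(\G^{c}(s))\le\mu_{\mathcal{D}}(\G^{c}(s))$, so equating the amplitudes can only tighten the bound. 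Running the argument of Th.~\ref{th:IIID} verbatim --- close the fictitious feedback $\bDelta_f$ from $\e$ to $(z(0),\w_u)$, then invoke~\cite[Th.~10.8]{Zhou} --- then yields $\|\T^{u}_{\vec e,(z(0),\vec w_u)}(s,\delta,\delta)\|\le\mu_{\mathcal{D}'}(\G^{c}(s))$ for $\delta<1/\mu_{\mathcal{D}'}(\G^{c}(s))$, which is the claim (the statement's reference to Th.~\ref{th:IIIA} being understood as its general form Th.~\ref{th:IIID}).

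The one step needing care --- and the place I expect the real work to sit --- is the analogue of the two-factor determinant split of Eq.~\eqref{eq:2factors}, which must still separate cleanly into a closed-loop-stability factor and a performance factor. Because the block-columns of $\G_{11}$ in Eq.~\eqref{eq:Gez0wuC} associated with the $\S_c$-blocks are zero, $\det(\I-\G_{11}\bDelta_g)$ factors through the $\delta\S$ terms alone and reduces, up to a constant $\det(\bPhi^{-1})$-type factor, to a power of $\det(s\I-\A-\delta\S)$, exactly as before; the sensor uncertainty enters only the output map, never the state recursion, so it cannot affect stability. The obstacle is thus purely bookkeeping: one must confirm that those block-columns of $\G_{11}$ genuinely vanish and are not merely zero for the particular block ordering adopted in Eq.~\eqref{eq:Gez0wuC}, after which the corollary is a one-line specialization of Th.~\ref{th:IIID}.
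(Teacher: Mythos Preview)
Your proposal is correct and in fact supplies considerably more detail than the paper itself: the corollary is stated in the paper with a $\blacksquare$ immediately following and no separate proof, i.e., it is treated as an immediate specialization of Theorem~\ref{th:IIID}. Your observation that $\G^{c}_{\vec e,(z(0),\vec w_u)}$ is assembled independently of the numerical values $\delta,\delta_c$, that setting $\delta_c=\delta$ merely collapses the five real-scalar blocks of $\bDelta_g$ into a single repeated real block (still an admissible $\mu$-structure), and that the $\S_c$-columns of $\G_{11}$ vanish so the stability factor is unaffected, is exactly the bookkeeping the paper leaves implicit; you also correctly flag that the reference to Th.~\ref{th:IIIA} in the statement should be read as Th.~\ref{th:IIID}.
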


In the special case where we have an uncertain $C$ matrix with no initial state error, $z(0) = 0$, the inputs $\vec{\hat{\eta}_2}$ and $\vec{\hat{\eta}_5}$ are both zero and we can simplify Eq.~\eqref{eq:Gez0wuC} by deleting the second and fifth row and column of $\G_{\vec{e},(z(0),\vec{w}_u)}^c$.
Similarly, if we have no uncertainty in the sensor matrix $C$ but an initial state error then $ 0 = \vec{C}_p - \vec{C}_u$ allows the reduction $\vec{\hat{\eta}} = \left[ \vec{\hat{\eta}_1} \ \vec{\hat{\eta}_2} \right]^T$ and $\vec{\hat{v}} = \left[ \vec{\hat{v}_1} \ \vec{\hat{v}_2} \right]^T$ and again we can simplify Eq.~\eqref{eq:Gez0wuC}.

\section{Robust performance in the perturbed formulation}\label{sec:perturbed}

We consider the point of view taken by Eq.~\eqref{eq:z_perturbed} with a \emph{perturbed} driving term $\vec{w}_p$. Taking its Laplace transform yields
\begin{subequations}\label{eq:L_z_perturbed}
  \begin{align}
    (s\I-\A) \z(s) &= \delta\S \w_p(s) + \vec{z}(0),\\
    \e(s)          &= \C_u \z(s) + (\C_p-\C_u) \w_p(s).
  \end{align}
\end{subequations}

\subsection{Zero initial error}\label{sec:ewp}

If $s\I-\A$ is invertible and there is no initial state preparation error, introducing the structured uncertainty $\C_p-\C_u=\delta_c\S_c$ yields
\begin{equation}
  \T_{\vec{e},\vec{w}_p}^p(s,\delta,\delta_c) = \C_u \bPhi^{-1}(s)\delta \S+\delta_c \S_c.
\end{equation}
Bounding $\|\T_{\vec{e},\vec{w}_p}^p(s,\delta)\|$ can be trivially although conservatively done by
\begin{equation}
  \|\T_{\vec{e}\vec{w}_p}^p(s,\delta,\delta_c)\| \leq
  \| \begin{bmatrix} \C_u\bPhi^{-1}(s)\S & \S_c \end{bmatrix}\| \sqrt{\delta^2+\delta^2_c},
\end{equation}
without recourse to the robust performance theorem. 
The perturbed case, hence, offers a new approach, circumventing the \muSSV~analysis.

It is, however, of interest to approach the problem in the robust performance context, observing that $\T_{\vec{e},\vec{w}_p}^p(s,\delta,\delta_c)$ is obtained from
\begin{equation}\label{eq:GewpC}
  \begin{bmatrix} \v_1 \\ \v_2 \\ \cmidrule(lr){1-1} \e \end{bmatrix} =
  \underbrace{\left[\begin{array}{cc|c}
    0 & 0 & \I\\
    0 & 0 & \I\\\hline
    \C_u\bPhi^{-1}\S & \S_c & 0
  \end{array}\right]}_{G^c_{\vec{e},\vec{w}_p}}
  \begin{bmatrix}
    \Eta_1 \\ \Eta_2 \\ \cmidrule(lr){1-1} \w_p
  \end{bmatrix}
\end{equation}
after the feedback
\begin{equation}\label{eq:feedback}
  \begin{bmatrix} \Eta_1 \\ \Eta_2\end{bmatrix}
  =\begin{bmatrix} \delta \I & 0 \\ 0 & \delta_c \I\end{bmatrix}
  \begin{bmatrix} \v_1 \\ \v_2\end{bmatrix}.
\end{equation}
With $\bDelta(\delta,\delta_c,\bDelta_f)$ as defining the structure $\mathcal{D}$, we have the following theorem, which differs from the classical robust performance of Th.~\ref{th:IIIA}:
\begin{theorem}\label{th:IVA}
If $\Phi(s)$ is invertible in the RHP,  $\|\T^p_{\vec{e},\vec{w}_p}(j\omega,\delta,\delta_c)\|\leq \mu_{\mathcal{D}}(\G_{\vec{e},\vec{w}_p}^c(j\omega))$, for $\delta$ such that $\vec{w}_p \in L^2$.
\end{theorem}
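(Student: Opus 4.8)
The plan is to reproduce, in the perturbed formulation, the same linear-fractional packaging that underlies Theorem~\ref{th:IIIA}, and then invoke the robust performance theorem \cite[Th.~10.8]{Zhou} verbatim. First I would observe that Eq.~\eqref{eq:GewpC} together with the feedback Eq.~\eqref{eq:feedback} exhibits $\T^p_{\vec{e},\vec{w}_p}(s,\delta,\delta_c)$ exactly as the upper linear fractional transformation $\mathcal{F}_u(\G^c_{\vec{e},\vec{w}_p}(s),\operatorname{diag}(\delta\I,\delta_c\I))$: substituting $\Eta_1=\delta\v_1$, $\Eta_2=\delta_c\v_2$ into the block equations gives $\v_1=\v_2=\w_p$ and $\e=\C_u\bPhi^{-1}\S\,\delta\w_p+\S_c\,\delta_c\w_p$, which matches $\C_u\bPhi^{-1}(s)\delta\S+\delta_c\S_c$ applied to $\w_p$. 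This is the structural check that the $\G$-matrix is ``properly constructed'' in the sense required by the $\mu$-machinery.

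Next I would introduce the fictitious full-block feedback $\w_p=\bDelta_f\e$ and combine it with the structured uncertainty into $\bDelta(\delta,\delta_c,\bDelta_f)\in\mathcal{D}$, exactly as in Eqs.~\eqref{eq:boldDeltaIIIA}--\eqref{eq:2factors} of the unperturbed case. The determinant identity factors as $\det(\I-\G^c_{\vec{e},\vec{w}_p}\bDelta)=\det(\I-0\cdot\operatorname{diag}(\delta\I,\delta_c\I))\cdot\det(\I-\T^p_{\vec{e},\vec{w}_p}\bDelta_f)=\det(\I-\T^p_{\vec{e},\vec{w}_p}\bDelta_f)$, because the $\G_{11}$ block here is zero (the perturbed error map has no internal feedback loop through $\delta$ or $\delta_c$). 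So the stability factor is trivially $1$, and the well-posedness condition reduces to the nonsingularity of $s\I-\A$ in the RHP, which is the hypothesis that $\Phi(s)$ is invertible in the RHP. Applying \cite[Th.~10.8]{Zhou} then yields $\|\T^p_{\vec{e},\vec{w}_p}(j\omega,\delta,\delta_c)\|\le\mu_{\mathcal{D}}(\G^c_{\vec{e},\vec{w}_p}(j\omega))$ on the imaginary axis.

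The genuinely new ingredient — and the step I expect to be the main obstacle — is the role of the clause ``for $\delta$ such that $\vec{w}_p\in L^2$.'' Unlike the unperturbed case, where $\A+\delta\S$ governs the homogeneous dynamics of $\vec{z}$ and the driving term $\w_u=\vec{r}_u$ is assumed $L^2$, here the driving term is the perturbed state $\vec{r}_p$, whose Laplace transform $\w_p(s)$ exists as an $L^2$ (i.e.\ $H_2$-boundary) function only when the perturbed dynamics $\A+\delta\S$ is itself stable — which constrains $\delta$. I would make this precise by noting that $\vec{r}_p$ solves Eq.~\eqref{eq:r_perturbed} and, for $\vec{d}\in L^2$ (or $\vec{d}=0$ with $\vec{r}_{p,0}$ arbitrary), $\vec{r}_p\in L^2$ iff $\A+\delta\S$ is Hurwitz; hence the admissible $\delta$ are those keeping $\A+\delta\S$ stable, equivalently $\delta$ below the stability radius of $\A$ in the direction $\S$. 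On that $\delta$-range the $H_\infty$ norm $\|\T^p_{\vec{e},\vec{w}_p}(\cdot,\delta,\delta_c)\|_\infty$ is well-defined as an induced $L^2$ gain and the frequency-wise bound integrates to the stated inequality. I would also flag that, in contrast to Th.~\ref{th:IIIA}, no constraint of the form $\delta<1/\mu_{\mathcal{D}}$ is needed for well-posedness of the LFT here, precisely because $\G_{11}=0$; the only restriction on $\delta$ is the $L^2$-admissibility of $\w_p$, and this is exactly the point the theorem statement is emphasizing as the distinction from the classical result.
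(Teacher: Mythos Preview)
Your proposal is correct and takes essentially the same approach as the paper. The paper's proof is a terse two-sentence version of exactly what you wrote: it records the determinant identity $\det(\I-\G^c_{\vec{e},\vec{w}_p}\bDelta)=\det(\I-\T^p_{\vec{e},\vec{w}_p}\bDelta_f)$ (the triviality of the stability factor, which you correctly trace to $\G_{11}=0$), notes that this bounds $\|\T^p\|$ without enforcing robust stability, and defers stability to the separate hypothesis $\vec{w}_p\in L^2$; your additional remarks on the Hurwitz condition for $\A+\delta\S$ and the absence of a $\delta<1/\mu_{\mathcal D}$ constraint simply make explicit what the paper leaves implicit.
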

\begin{proof}
Observe that
$\det(\I-\G_{\vec{e},\vec{w}_u}^c(s)\bm\Delta)=\det(\I-\T^p_{\vec{e},\vec{w}_u}(s,\delta,\delta_c)\bDelta_f)$. Enforcing $\det(\I-\G_{\vec{e},\vec{w}_u}^c(\jmath \omega)\bm\Delta)\ne 0$ therefore bounds
$\T^p_{\vec{e},\vec{w}_u}(j\omega,\delta,\delta_c)$ but does not enforce robust stability which is dealt with separately by enforcing $\vec{w}_p$ to be a square integrable signal.
\end{proof}

Observe that the bound of Th.~\ref{th:IVA} holds for all $\delta$'s, but that $\|\T^p_{\vec{e},\vec{w}_p}(j\omega,\delta,\delta_c)\|_\infty$ is a valid operator norm only for $w_p\in L^2$, hence the restriction on $\delta$.

\subsection{Effect of initial state preparation error}\label{sec:pz0S}

A significant difference between the unperturbed and the perturbed case is that in the latter the effect of $\vec{z}(0)$ is completely decoupled from $\vec{\hat{w}}_p$, as seen by the transfer matrix
\begin{equation}
  \T_{\vec{e},(\vec{z}(0),\vec{w}_p)}^p(s,\delta,\delta_c)
  = \begin{bmatrix}
    \C_u \bPhi^{-1}(s) & \C_u \bPhi^{-1}(s)\delta \S+\delta_c \S_c
  \end{bmatrix}.
\end{equation}
Bounding $\|\T_{\vec{e},(z(0),\vec{w}_p)}^p(s,\delta,\delta_c)\|$ is again trivial,
\begin{equation} \label{eq:Tbound}
  \begin{split}
    \|\T_{\vec{e},(\vec{z}(0),\vec{w}_p)}^p(s,\delta,\delta_c)\|&\leq  \|\C_u\bPhi^{-1}(s)\|\\
    &+ \| \begin{bmatrix}\C_u \bPhi^{-1}(s) \S& \S_c \end{bmatrix}\| \sqrt{\delta^2+\delta_c^2},
  \end{split}
\end{equation}
and does not require the robust performance analysis. It, hence, offers a new approach.

If a robust performance analysis based on structured singular value analysis is desired, it can be accomplished via
\begin{equation}\label{eq:Gez0wpC}
  \begin{bmatrix}
    \v_1\\ \v_2\\\cmidrule(lr){1-1} \e
  \end{bmatrix} =
  \underbrace{\left[\begin{array}{cc|cc}
    0 & 0 & 0 & \I\\
    0 & 0 & 0 & \I\\\hline
    \C_u\bPhi^{-1}\S & \S_c & \C_u\bPhi^{-1} & 0
  \end{array}\right]}_{G^c_{\vec{e},(\vec{z}(0),\vec{w}_p)}}
  \begin{bmatrix}
    \Eta_1 \\ \Eta_2\\\cmidrule(lr){1-1} \vec{z}(0)\\ \w_p
  \end{bmatrix}
\end{equation}
after the same feedback as Eq.~\eqref{eq:feedback}. With the $\mathcal{D}$ structure defined by $\bDelta(\delta,\delta_c,\bDelta_f)$, we have
\begin{theorem}\label{th:IVB}
If $\Phi(s)$ is invertible in the RHP,  $\|\T^p_{\vec{e},(z(0),\vec{w}_u)}(s,\delta,\delta_c)\|\leq
  \mu_{\mathcal{D}}(\G_{\vec{e},(z(0),\vec{\hat{w}}_u)}^c(s))$ for $s=j\omega$ and  $\delta$ such that $\vec{w}_p \in L^2$. $\blacksquare$
\end{theorem}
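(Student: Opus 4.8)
The plan is to replay, essentially verbatim, the argument behind Theorem~\ref{th:IVA}, the only new ingredient being the extra exogenous channel carrying the initial-state vector $\vec{z}(0)$. First I would close the structured loop $\Eta=\bDelta_g\v$ with $\bDelta_g=\begin{bmatrix}\delta\I&0\\0&\delta_c\I\end{bmatrix}$, i.e.\ the feedback of Eq.~\eqref{eq:feedback}, around $\G^c_{\vec{e},(\vec{z}(0),\vec{w}_p)}$ of Eq.~\eqref{eq:Gez0wpC}; by construction this upper LFT reproduces $\T^p_{\vec{e},(\vec{z}(0),\vec{w}_p)}(s,\delta,\delta_c)=\begin{bmatrix}\C_u\bPhi^{-1}(s)&\C_u\bPhi^{-1}(s)\delta\S+\delta_c\S_c\end{bmatrix}$. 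Then, exactly as in Section~\ref{sec:ewu}, I would wrap the full fictitious complex block $\bDelta_f$ from $\e$ back onto $\begin{bmatrix}\vec{z}(0)&\w_p\end{bmatrix}^T$, producing the overall block-diagonal feedback $\bDelta(\delta,\delta_c,\bDelta_f)$ that defines the structure $\mathcal{D}$.

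The crux is the same observation as in the proof of Theorem~\ref{th:IVA}: because the $\G_{11}$ block of $\G^c_{\vec{e},(\vec{z}(0),\vec{w}_p)}$ in Eq.~\eqref{eq:Gez0wpC} vanishes identically --- the uncertainty outputs $\v_1,\v_2$ are mere copies of $\w_p$ and never see the uncertainty inputs $\Eta_1,\Eta_2$ --- the LFT determinant collapses to
\begin{equation*}
\det\bigl(\I-\G^c_{\vec{e},(\vec{z}(0),\vec{w}_p)}(s)\,\bDelta\bigr)=\det\bigl(\I-\T^p_{\vec{e},(\vec{z}(0),\vec{w}_p)}(s,\delta,\delta_c)\,\bDelta_f\bigr),
\end{equation*}
with \emph{no} companion stability factor of the $\det(s\I-\A-\delta\S)$ type that appeared in the classical factorization Eq.~\eqref{eq:2factors}. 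Hence well-posedness along the imaginary axis, $\det(\I-\G^c_{\vec{e},(\vec{z}(0),\vec{w}_p)}(\jmath\omega)\bDelta)\neq 0$ for all $\bDelta\in\mathcal{D}$ with $\|\bDelta\|<1/\mu_{\mathcal{D}}(\G^c_{\vec{e},(\vec{z}(0),\vec{w}_p)}(\jmath\omega))$, is equivalent to $\det(\I-\T^p_{\vec{e},(\vec{z}(0),\vec{w}_p)}(\jmath\omega,\delta,\delta_c)\bDelta_f)\neq 0$ for all such $\bDelta_f$, which by the singular-value characterization recalled just before Theorem~\ref{th:IIIA} is exactly the claimed bound $\|\T^p_{\vec{e},(\vec{z}(0),\vec{w}_p)}(\jmath\omega,\delta,\delta_c)\|\le\mu_{\mathcal{D}}(\G^c_{\vec{e},(\vec{z}(0),\vec{w}_p)}(\jmath\omega))$.

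The point needing the most care --- and the reason this is weaker than the classical robust-performance statement of Theorem~\ref{th:IIIA} --- is that the $\mu$ test just obtained certifies only a bound on the frequency response, not robust/internal stability, precisely because the vanishing $\G_{11}$ block removes any closed-loop characteristic polynomial from the determinant. Stability must be handled separately: $\|\T^p_{\vec{e},(\vec{z}(0),\vec{w}_p)}(\jmath\omega,\delta,\delta_c)\|$ is a legitimate operator norm only when the driving signal $\vec{w}_p=\vec{r}_p$ genuinely lies in $L^2$, which forces $\A+\delta\S$ to be Hurwitz, i.e.\ $\delta$ inside the stability region of the perturbed dynamics --- this is the ``for $\delta$ such that $\vec{w}_p\in L^2$'' proviso, which must be imposed by hand. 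The initial-state channel needs no extra work, since (as already exhibited by the trivial estimate~\eqref{eq:Tbound}) $\vec{z}(0)$ enters decoupled from $\w_p$ and merely rides along in the exogenous input. With that caveat in place the inequality follows, and the remaining bookkeeping being identical to the proof of Theorem~\ref{th:IVA}, I would not reproduce it in full.
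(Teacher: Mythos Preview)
Your proposal is correct and mirrors exactly the paper's (implicit) argument: the paper states Theorem~\ref{th:IVB} with only a $\blacksquare$, relying on the reader to carry over the proof of Theorem~\ref{th:IVA} verbatim to the enlarged interconnection~\eqref{eq:Gez0wpC}, which is precisely what you do. Your identification of the vanishing $\G_{11}$ block as the reason the determinant collapses to $\det(\I-\T^p\bDelta_f)$ with no stability factor, and the consequent need to impose $\vec{w}_p\in L^2$ separately, is the same mechanism spelled out in the proof of Theorem~\ref{th:IVA}.
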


\section{Benchmark mechanical example: Double spring-mass-dashpot}\label{sec:mechanical_example}

\begin{figure*}
\subfloat[Uncertain stiffness $k_1$\label{fig:k1}]
{\includegraphics[width=0.49\textwidth,height=2in]{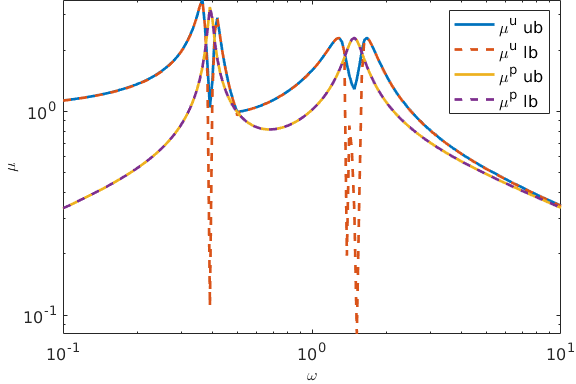}}
\hfill
\subfloat[Uncertain stiffness $k_2$\label{fig:k2}]{
\includegraphics[width=0.49\textwidth,height=2in]{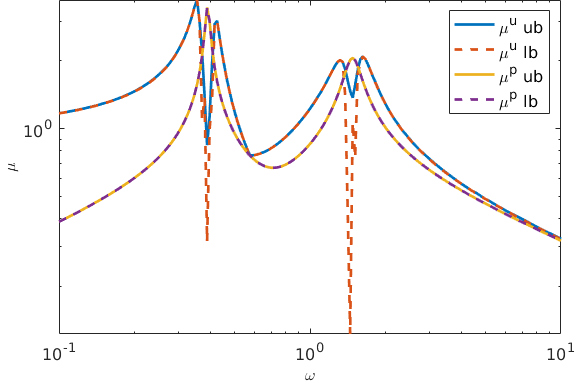}}
\\
\subfloat[Uncertain damping $b_1$ \label{fig:b1}] {\includegraphics[width=0.49\textwidth,height=2in]{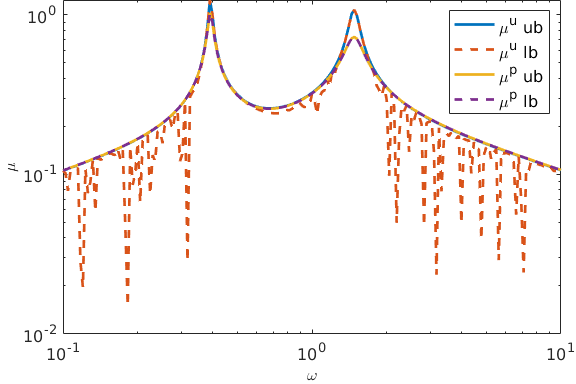}}
\hfill
\subfloat[Stiffness $k_1$ uncertainty and sensor-actuator misalignment ($S_c$) \label{fig:SSc}] {\includegraphics[width=0.49\textwidth,height=2in]{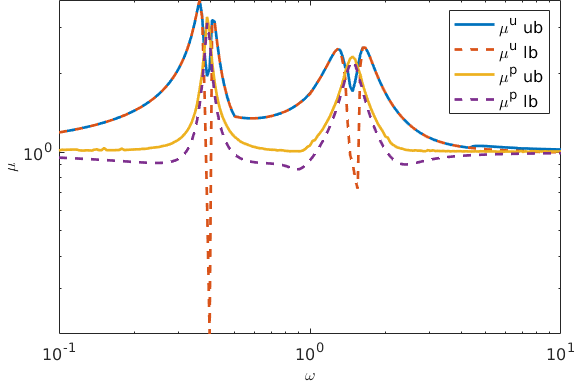}}
\\
\subfloat[Combination of stiffness $k_1$ uncertainty and initial error $\vec{z}(0)$\label{fig:k1z0}] {\includegraphics[width=0.49\textwidth,height=2in]{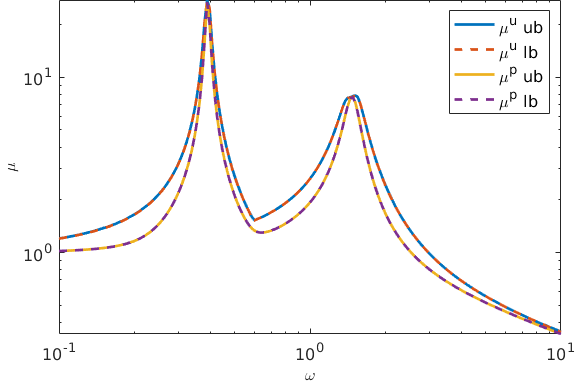}}
\hfill
\subfloat[Stiffness $k_1$ uncertainty, initial error $\vec{z}(0)$ and $S_c$ uncertainty $C$\label{fig:k1z0Sc}]{\includegraphics[width=0.49\textwidth,height=2in]{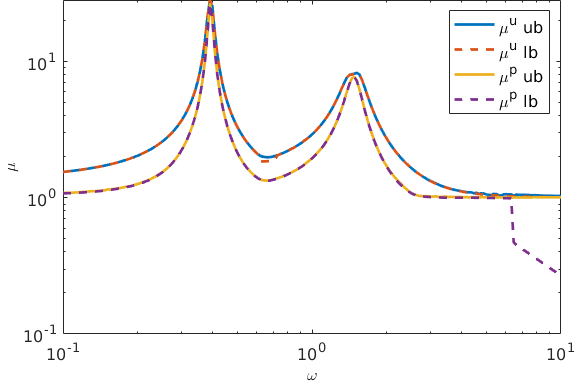}}
\caption{Upper and lower bounds on the structured singular value $\mu$ for unperturbed and perturbed formulations for different combinations of uncertainties for the spring-mass dashpot system with $m_1=3, m_2=1$, $k_1=k_2=1$, $b_1=b_2=0.1$.}
\end{figure*}

To illustrate the perturbed vs. unperturbed approach, we consider a simple system that has been a benchmark problem for robust control synthesis~\cite{Benchmark}: the double spring-mass-dashpot system of~\cite[Example 4.2]{Zhou}, where $d_1$, $d_2$ are the forces acting on masses $m_1$, $m_2$, resp., and $e_1$, $e_2$ are the rates of motion of masses $m_1$, $m_2$, resp., relative to the spring rest positions:
\begin{subequations}\label{eq:benchmark}
  \begin{align}
  \frac{d}{dt}\vec{r}_u &=
    \begin{bmatrix} \Z & \I\\ \M_1 & \M_2 \end{bmatrix} \vec{r}_u
    +\begin{bmatrix} \Z \\ \I\end{bmatrix}\vec{d},\\
  \vec{e} &=\left[\begin{array}{cc|cc}
      0 & 0 & m_1^{-1} & 0\\
      0 & 0 & 0 & m_2^{-1}
    \end{array}\right]\vec{r}_u,
  \end{align}
\end{subequations}
where $\Z = \begin{bmatrix} 0 & 0 \\ 0 & 0 \end{bmatrix}$, $\I= \begin{bmatrix} 1 & 0 \\ 0 & 1 \end{bmatrix}$ is the identity matrix, and
\begin{gather*}
  \M_1 = \begin{bmatrix}
    -\frac{k_1}{m_1} & \frac{k_1}{m_2} \\
    \frac{k_1}{m_1} & -\frac{k_1+k_2}{m_2}
  \end{bmatrix}, \quad
  \M_2 =
  \begin{bmatrix}
    -\frac{b_1}{m_1} & \frac{b_1}{m_2}\\
    \frac{b_1}{m_1} & -\frac{b_1+b_2}{m_2}
  \end{bmatrix}.
\end{gather*}
$k_1, k_2>0$ and $b_1,b_2>0$ are the stiffness and damping constants of the springs, where the first spring connects $m_1$ to a fixed point and the second spring connects $m_1$ to $m_2$. As $m_1$ and $m_2$ appear in both $\A$ and $\C$, if the masses are uncertain, a minor modification of the method of Sec.~\ref{ss: uncertain_C_and_z(0)} is needed, 
as formalized in Corollary~\ref{cor:SsameasSc}.

\subsection{Positive real design}\label{s:positive_real}

In this case-study, $\T_{\vec{e},\vec{d}}$ is a mapping from force actuators to \emph{co-located} rate sensors, well known to be \emph{passive}, \emph{dissipative} or \emph{positive real}~\cite{positive_real, symmetric_passive}. Here, co-location means that the force $d_i$ applies to the mass $m_i$ and that $e_i$ measures the rate of motion of $m_i$. Passive means that, for $\vec{r}_u(0)=0$, there exists a constant $\epsilon>0$ such that $\int_0^T \vec{e}^T(t) \vec{d}(t)\,dt > \epsilon\|\vec{d}\|^2_{L^2[0,T]}$, $\forall T>0, \forall \vec{d}\in L^2[0,T]$. For such a system, any feedback $\vec{d}=-D\vec{e}=-DC\vec{r}_u$, where $D=D^T>0$, preserves closed-loop stability, similar to a bias-field-mediated feedback in quantum spintronic systems, which also preserves stability~\cite{rings_QINP}.

This ``positive real'' design~\cite{positive_real,symmetric_passive} has been widely applied to space structures with the caveat that for distributed parameter systems with vibration eigenfrequencies $\omega_{k \to \infty} \to \infty$, it is difficult to maintain co-location of the sensors and actuators. Indeed, if $\varepsilon$ is the distance between the actuating and the sensing points, there exists a $k$ such that $\varepsilon \in [\lambda_k/2, \lambda_k]$, where $\lambda_k=2 \pi c/\omega_k$ is the wavelength of the vibration mode $\omega_k$ and $c$ is the propagation speed of the elastic disruption. In this situation, the feedback from the sensing point to the actuating point is no longer negative, but positive. Hence, it destabilizes the vibration mode $\omega_k$.  In the setup of Eq.~\eqref{eq:benchmark}, a co-location error can be modeled as the structured perturbation
\begin{equation}\label{eq:Sc}
  \S_c=\left[\begin{array}{cc|cc}
    0 & 0 & 0 & m_2^{-1}\\
    0 & 0 & m_1^{-1} & 0
  \end{array}\right],
\end{equation}
meaning that $e_1$ blends the rates of motion of $m_1$ and $m_2$.

Positive realness also has a circuit theory formulation, where $\vec{d}$ is the voltage and $\vec{e}$ the current of a passive $n$-port~\cite{Belevitch}. Such circuit models have been used to simulate quantum phenomena~\cite{Ezawa1}, in particular four-dimensional topological insulators~\cite{topological_insulators}.  Co-location in this context means that the voltage is measured at the same port at which the current is injected.

\subsection{Uncertain stiffness or damping}

We first consider a relative uncertainty in the stiffness $k_1$ modeled as $k_1(1+\delta)$. The unperturbed $\mu^u$ is computed using Th.~\ref{th:IIIA}; the perturbed $\mu^p$ is computed using Th.~\ref{th:IVA} after removal of the second row and column of $\G_{\vec{e},\vec{w}_p}^c$ as defined by Eq.~\eqref{eq:GewpC}, as $\delta_c=0$.  The results, illustrated in Fig.~\ref{fig:k1}, reveal two regimes: $\mu^p(\jmath \omega) > \mu^u(\jmath \omega)$ around the resonant frequencies and the reverse inequality $\mu^p(\jmath \omega) < \mu^u(\jmath \omega)$ away from the resonances.  To explain this phenomenon, consider
\begin{subequations}\label{eq:Gu_vs_Gp}
  \begin{align}
    \det(\I-\G^u\bm\Delta) =& \det(\I-\Phi^{-1}\delta \S)\times\nonumber\\
    & \det(\I-\T^p(I-\Phi^{-1}\delta S)^{-1}\bDelta_f),\\
    \det(\I-\G^p\bm\Delta) =& \det(\I-\T^p\bDelta_f),
  \end{align}
\end{subequations}
which is obtained by direct matrix manipulations and replacing $\T^u$ by its expression derived from Eq.~\eqref{eq:equate}. At the exact unperturbed resonance, $\|\T^p\|$ reaches its maximum, while $\T^u=\T^p(\I-\bPhi^{-1}\delta \S)^{-1}$ is damped by $(\I-\bPhi^{-1}\delta \S)^{-1}$ since $\bPhi^{-1}$ also achieves its maximum at the resonant eigefrequencies. Therefore, at exact resonance, $\|\T^u\| < \|\T^p\|$. Denoting by $\bDelta_f^u$, $\bDelta_f^p$ the $\bDelta_f$, for which the determinant in Eq.~(\ref{eq:Gu_vs_Gp}a) and Eq.~(\ref{eq:Gu_vs_Gp}b), resp., vanishes, we have $\|\bDelta_f^u\|>\|\bDelta_f^p\|$, while $|\delta|$, the relative error on the stiffness, is bounded by $1$ to preserve closed-loop stability. Hence $\mu^u=1/\|\bDelta_f^u\|<\mu^p=1/\|\bDelta_f^p\|$, as shown by Figs.~\ref{fig:k1}, \ref{fig:k2} at the exact resonant frequency. Away from both the exact and the perturbed resonances, and under the assumption that both $\|\T^u\|$ and $\|\T^p\|$ are sufficiently damped (precisely $<1$), $\delta$ becomes the deciding factor of $\mu$ in the sense that $\mu^u=1/\min_{\bDelta}\max\{\|\bDelta_f\|, |\delta|\}=|\delta|=1$ for the unperturbed case. For the perturbed case, since $\delta$ is not involved in stability, $\mu^p=1/\min_{\bDelta}\max\{\|\bDelta_f\|, |\delta|\}=\|\T^p\|<1$. Hence, under sufficient damping far from the resonance, $\mu^u>\mu^p$, again shown by Figs.~\ref{fig:k1}, \ref{fig:k2}.

With regard to the damping parameters, the linear damping model is an approximation of an essentially nonlinear stress-strain hysteresis in the spring material.  The results of uncertain damping $b_1(1+\delta)$ for the same numerical example in Eq.~\eqref{eq:benchmark} using Theorems~\ref{th:IIIA},~\ref{th:IVA} are shown in Fig.~\ref{fig:b1}.  Observe that the upper bounds on the unperturbed and perturbed $\mu$ reach their maxima at the same frequency $\bar{\mu}^p(0.79)=\bar{\mu}^u(0.79)=0.65$ and $\bar{\mu}^p(2.2)=\bar{\mu}^u(2.2)=1.3$. This is not surprising, as uncertainty in the damping does not change the resonant frequency.

\subsection{Sensor-actuator misalignment and initial state errors}

In the presence of sensor mis-alignment, $\mu$ is computed by appealing to Th.~\ref{th:IIID}, Eq.~\eqref{eq:Gez0wuC} with $z(0)=0$ and the second and fifth columns and rows of 
$\G^{c}_{\vec{e},(z(0),\vec{w_u})}$ removed in the unperturbed case, and Th.~\ref{th:IVA}, Eq.~\eqref{eq:GewpC} in the perturbed case.  The results are shown in Fig.~\ref{fig:SSc}.  As before the unperturbed and perturbed cases yield close results at the resonant frequencies that, here, are shifted relative to the stiffness $k_1$ uncertainty case.

Keeping $k_1$ uncertain but adding an initial state error $\vec{z}(0)$ instead of a collocation error, as in Th.~\ref{th:IIID} of Sec.~\ref{ss: uncertain_C_and_z(0)} with $0 = \vec{C}_p - \vec{C}_u$ for the unperturbed case and Th.~\ref{th:IVB} of Sec.~\ref{sec:pz0S} for the perturbed case, the unperturbed $\mu$ is computed from Eq.~\eqref{eq:Gez0wuC} after setting $\C_p=\C$ and $\C_p-\C_u=0$ and deleting rows and columns 3 through 5 of $\G^{c}_{\vec{e},(z(0),\vec{w_u})}$.  The perturbed $\mu$ is computed from Eq.~\eqref{eq:Gez0wpC}, after removal of the second row and second column of $\G^c_{\vec{e},(\vec{z}(0),\vec{w}_p)}$. The results in Fig.~\ref{fig:k1z0} show again consistency between the unperturbed and perturbed cases. It is also worth noting that the $\mu$ has increased compared with the simplest case of $k_1$ uncertainty, as a result of the initial state error.

The combined effect of uncertainties in $k_1$, $z(0)$ and $\S_c$ (for the same model system) are shown in Fig.~\ref{fig:k1z0Sc}.  The unperturbed case is dealt with by Th.~\ref{th:IIID} and the perturbed case by Th.~\ref{th:IVB}.  

\subsection{Comparison of \muSSV and direct bounds}

In the perturbed case Eq.~\eqref{eq:Tbound} provides explicit bounds on the norm of the transfer function, which we can compare to the bounds obtained from $\mu$-analysis.    Table~\ref{tab:Bounds:compare} shows that the explicit bounds on 
$\|\T_{\vec{e},(\vec{z}(0),\vec{w}_p)}^p(s,\delta,\delta_c)\|$
for $\delta=1/\mu^p$ in the absence of initial state and co-location errors are in excellent agreement with the upper bounds obtained for $\mu^p$ for different structured perturbations.  
\begin{table} \centering
\begin{tabular}{|l|c|c|c|c|}
\hline
& $\omega$ & $\mu^p(i\omega)$ & $\delta_{\max}(s)=\mu^{-1}(s)$ & $\|\T^p(s,\delta_{\max},0)\|$ \\\hline
$S_1$ & 0.389  & 3.2502	& 0.3077 & 3.2502 \\
      & 1.4791 & 2.2981	& 0.4351 & 2.2981 \\\hline
$S_2$ &	0.389  & 3.4448	& 0.2903 & 3.4448 \\
      & 1.4791 & 2.0611	& 0.4852 & 2.0611 \\\hline
$S_3$ & 0.389  & 1.0278	& 0.973	 & 1.0278 \\
      & 1.4791 & 0.7267	& 1.276	 & 0.7267 \\\hline
$S_4$ & 0.3890 & 1.0893	& 0.918	 & 1.0893 \\
      & 1.4791 & 0.6518	& 1.5340 & 0.6518 \\\hline
\end{tabular}
\caption{Comparison of $\mu(s)$ for the perturbed case vs $\|\T_{\vec{e},(\vec{z}(0),\vec{w}_p)}^p(s,\delta,\delta_c)\|$
calculated according to Eq.~\eqref{eq:Tbound} with $\delta=\mu(s)^{-1}$, $\delta_c=0$, $s=i\omega$ for structured perturbations $S_1$ to $S_4$ corresponding to perturbation of the stiffness constants $k_1$, $k_2$ and damping rates $b_1$, $b_2$, respectively.}
\label{tab:Bounds:compare}
\end{table}

\section{Application to controlling quantum systems}\label{sec:quantum_examples}

Another enlightening application is control of quantum systems. Coherent quantum systems are open-loop purely oscillatory, and neither coherent open-loop controls nor physically relevant structured uncertainties like $J$-coupling errors can change the closed-loop oscillatory situation~\cite{rings_QINP, chains_QINP}. For open quantum systems, decoherence acts as a stabilizing controller~\cite{CDC_decoherence}, but there are still challenges in applying conventional structured singular value analysis. For example, constants of motion, such as the unit-trace constraint for density operators describing quantum states, create a pole at $0$ in $\bPhi^{-1}(s)$ in the real Bloch representation of the dynamics commonly used to describe open quantum systems. However, as noted in Sec~\ref{sec:error-dynamics}, since the unperturbed error dynamics is driven by the unperturbed dynamics, the frequency sweep could be limited to the resonant frequencies of $\A$, which are by definition nonvanishing; hence the (possibly multiple) closed-loop pole at $0$ is avoided. The perturbed formulation is theoretically more challenging, since the frequency sweep should include the uncertain resonant frequencies of $\A+\delta \S$, and passing over $\omega=0$ cannot be ruled out. If, while numerically exploring $\mu_\mathcal{D}(\omega \to 0)$, it appears that the maximum could be at $\omega=0$, with the difficulty that $\bPhi^{-1}(0)$ fails to exist, then the formal procedure developed in~\cite{robust_performance_open} should be followed. Essentially, $\bPhi^{-1}(s)$ should be replaced by $\bPhi^{\#}(s)$, where $\#$ denotes a specialized pseudo-inverse close in spirit to, but different from, the Moore-Penrose pseudo-inverse. This specialized pseudo-inverse applied to the construction of $\mu_\mathcal{D}$ cures the lack of continuity at $s=0$ and moreover for $s \ne 0$ coincides with the $\mu_\mathcal{D}$ derived from $\bPhi^{-1}(s)$. Note that in~\cite{soneil_mu}, this case has been approached heuristically by damping the closed-loop systems as $\A-\epsilon \I$ and then allowing $\epsilon \downarrow 0$.

\subsection{Bias field control of coupled qubit Rabi oscillations}

\begin{figure*}
\subfloat[Transfer fidelity oscillations]{\includegraphics[width=0.49\textwidth]{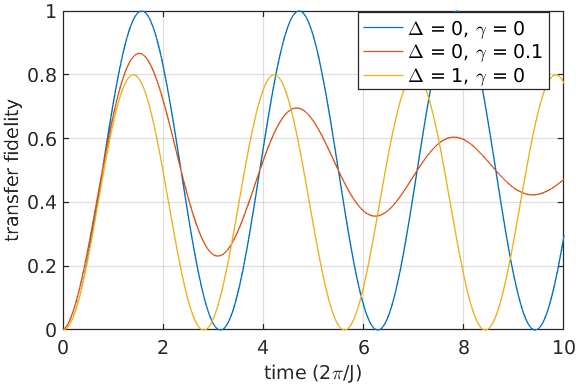}} \hfill
\subfloat[Transfer fidelity and time vs detuning]{\includegraphics[width=0.49\textwidth]{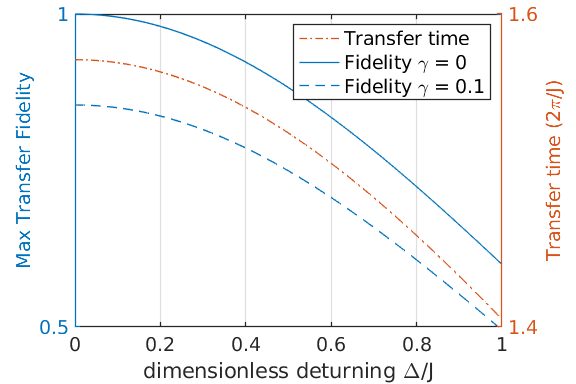}}
\caption{Transfer fidelity vs. time and maximum transfer fidelity and minimum transfer time as a function of detuning $\bDelta$ for quantum example.}\label{fig:Quantum1}
\end{figure*}

As a concrete example, consider a simple system of two qubits under XX-coupling of strength $J$, restricted to the single excitation subspace spanned by the states $\ket{L} = \ket{\uparrow \downarrow}$ and $\ket{R} = \ket{\downarrow \uparrow}$, which correspond to excitation of the left and right qubit, respectively. It is instructive to consider the two-qubit problem, which is analytically solvable, but the model can be extended to chains of more than two qubits and more complex networks. While the latter networks are generally not amenable to analytic solutions, numerical optimization suggests that high fidelities and good transfer times can be achieved by non-zero optimal biases, and that transfer fidelities may be robust to uncertainties in system parameters under certain conditions~\cite{statistical_control, Edmond_IEEE_AC}.

\subsection{Dynamics}

For two qubits the coherent dynamics in the single-excitation subspace can be modeled by the Hamiltonian
\begin{equation*}
  H = \hbar \begin{bmatrix} D_1 & J \\J & D_2 \end{bmatrix},
\end{equation*}
where $\hbar$ is the reduced Planck constant and $\hbar J$ and $\hbar D_k$ are exchange coupling and onsite potential energies, respectively. We assume that the local onsite potentials can be controlled via local electric or magnetic fields that shift the energy levels of the qubit, e.g., via Stark or Zeeman shifts, respectively, i.e., $D_1$ and $D_2$ serve as controls. To include decoherence, the state of the system is described by the density operator $\rho$, a Hermitian operator with $\Tr(\rho)=1$, which evolves as
\begin{equation}
  \tfrac{d}{dt} \rho = -\tfrac{\imath}{\hbar} [H,\rho] + \gamma \mathfrak{L}[V] \rho,
\end{equation}
where $\mathfrak{L}[V]$ is a typical Lindbladian,
\begin{equation} \label{eq:lindblad}
  \mathfrak{L}[V]\rho=V\rho V^\dagger-\tfrac{1}{2}(V^\dagger V\rho+\rho V^\dagger V).
\end{equation}
To model typical decoherence, which acts as dephasing in the Hamiltonian basis, we choose $V$ to be a Hermitian operator that commutes with the Hamiltonian, $[H,V]=0$. We take
\begin{equation*}
  V = \frac{1}{2 J_{\rm eff}} \begin{bmatrix} -\Delta & 2J \\ 2J & \Delta\end{bmatrix}.
\end{equation*}
where $\Delta := D_2-D_1$ is the detuning and $J_{\rm eff} = \sqrt{\Delta^2+4J^2}$. $\Delta/J$ is dimensionless, $\hbar\Delta$ is the difference between the energy levels of the left and right qubit and $\gamma$ is the decoherence rate.

The dynamics can be reformulated in a way similar to the state space formulation classically used in robustness analysis using the Bloch equation formulation~\cite{PhysRevA.93.063424,PhysRevA.81.062306,JPhysA37,neat_formula}, here rewritten in line with Eqs.~\eqref{eq:r_unperturbed} and~\eqref{eq:r_perturbed},
\begin{subequations}
\begin{align}
  \tfrac{d}{dt} \vec{r}_u &= \underbrace{(\A_H+\A_\mathfrak{L})}_{\A}\vec{r}_u+\bm B\vec{d}, \\
  \tfrac{d}{dt} \vec{r}_p &= \underbrace{(\A_H+\A_\mathfrak{L}+\delta \S)}_{\A+\delta \S}\vec{r}_u+\bm B\vec{d}.
  \end{align}
\end{subequations}
This formulation is obtained by expanding the density operator for the quantum state in terms of the Pauli matrices
\begin{equation} \label{eq:pauli}
 \sigma_x = \begin{bmatrix} 0 &  1\\ 1 &  0 \end{bmatrix},
 \sigma_y = \begin{bmatrix} 0 & -i\\ i &  0 \end{bmatrix},
 \sigma_z = \begin{bmatrix} 1 &  0\\ 0 & -1 \end{bmatrix},
 I= \begin{bmatrix} 1 & 0 \\ 0 & 1 \end{bmatrix}.
\end{equation}
For a two-qubit system it is customary to use the expansion
\begin{equation} \label{eq:rho-expansion}
  \rho=\tfrac{1}{2} (r_1 \sigma_x + r_2 \sigma_y + r_3 \sigma_z + I)
\end{equation}
with regard to the unnormalized Pauli matrices rather than an orthonormal basis for the Hermitian operator on the Hilbert space as this ensures that pure states, characterized by $\Tr(\rho^2)=1$, are mapped to points $[r_1,r_2,r_3]^T$ on the unit sphere in $\mathbb{R}^3$. Due to trace conservation, the coefficient $r_4$ is constant, and, with our choice of basis, $r_4=1$. Therefore, it makes sense to define the reduced state vector $\vec{r}_u=[r_1,r_2,r_3]^T \in \mathbb{R}^3$. It can be shown by direct calculation that the corresponding dynamical generators are explicitly
\begin{equation*}
  \A_H =\begin{bmatrix}
    0 & \Delta & 0 \\
    -\Delta & 0 & -2J \\
    0 & 2J & 0
  \end{bmatrix}, \,
  \A_\mathfrak{L} = \frac{-\gamma}{J_{\rm eff}^2}
  \begin{bmatrix}
    \Delta^2 & 0 & 2\Delta J \\
    0 & J_{\rm eff}^2 & 0 \\
    2\Delta J & 0 & 4J^2
  \end{bmatrix}.
\end{equation*}
The matrix $\A$ has rank $2$ with a pole at $0$ for any $\Delta$, $J$ and $\gamma$. The oscillatory eigenvalues are $\lambda(\A_H+\A_\mathfrak{L})=-\gamma \pm \imath J_{\mathrm{eff}}$. The \emph{structural stability} of the eigenvalues obviates the need to consider possible change of multiplicities, as considered in~\cite{robust_performance_open}.

The excitation in the coupled qubit system undergoes coherent oscillations between the left and right qubit, $\rho_{L}=\ket{L}\bra{L}$ and $\rho_{R}=\ket{R}\bra{R}$, corresponding to $\vec{r}_{L} = [0,0,1]^T$ and $\vec{r}_{R}=[0,0,-1]^T$, for the unperturbed and perturbed cases. Starting with an excitation of the left qubit, $\rho(0)=\rho_L$, we measure the excitation transfer to the right qubit as a function of time by the overlap $\mathcal{F}_{u/p}(t) = \Tr[\rho_R\rho_{u/p}(t)]$, where $\rho_{u/p}$ denotes the unperturbed or perturbed density~\cite{Liang_2019}. For $J$, $\Delta$ and $\gamma$ fixed and no noise (and the affine term can be shown to be zero), the evolution is given by $\vec{r}_u(t) = \exp(t \A) \vec{r}_u(0)$, where $\exp$ denotes the matrix exponential. Setting $\vec{r}_{u/p}(t) = (x(t),y(t),z(t))_{u/p}^T$, this shows that $\mathcal{F}_{u/p}(t) = \tfrac{1}{2}[1+\vec{r}_{R}^T \vec{r}_{u/p}(t)] = \tfrac{1}{2}[1-z_{u/p}(t)]$, and we can explicitly evaluate the matrix exponential to obtain an analytic formula for the transfer fidelity in the unperturbed case,
\begin{equation} \label{eq:q1-fid_t}
  \mathcal{F}_u(t) = \frac{2J^2}{J_{\rm eff}^2} \left[ 1-e^{-t\gamma} \cos(J_{\rm eff} t)\right].
\end{equation}

Eq.~\eqref{eq:q1-fid_t} shows that the fidelity undergoes damped oscillations as illustrated in Fig.~\ref{fig:Quantum1}(a). The first maximum is achieved for $\cos(J_{\rm eff} t) = -1$, i.e., a transfer time $t_f=\pi/J_{\rm eff}$. The corresponding fidelity is
\begin{equation}\label{eq:q1-fid_max}
  \mathcal{F}_{u}(t_f) = \frac{2J^2}{J_{\rm eff}^2} [ 1+ e^{-t_f\gamma}].
\end{equation}
Considering $\gamma\ge 0$, the maximum fidelity is $\mathcal{F}_{u,\max}(t_f) =\frac{4J^2}{J_{\rm eff}^2}$, achieved for $\gamma=0$. 
Recalling the definition for $J_{\rm eff} = \sqrt{\Delta^2+4J^2}$, we further see that the absolute maximum of $1$ is achieved for $\Delta=0$ and $J_{\rm eff}=2J$. There is a trade-off between the maximum transfer fidelity and the transfer time, however, as $t_f=\pi/J_{\rm eff}$ implies that the transfer time decreases with increasing $\Delta$. This offers some advantage in terms of increased transfer fidelity, as illustrated in Fig.~\ref{fig:Quantum1}(b), which shows that both the total fidelity and minimal transfer time decrease with increasing detuning $\Delta$. In most practical cases, however, when full state transfer is desired, the speedup due to nonzero $\Delta$ is minor compared to the fidelity loss. Additionally, while the fidelity starts at a lower value when decoherence is present, in this case it decreases more slowly with increasing detuning due to the decreased transfer time.

\subsection{Classical structured singular value analysis}

The dynamical generator depends on three core parameters $\Delta$, $J$, and $\gamma$, which are often not known precisely. To assess the robustness of observed fidelity oscillations with regard to these parameter uncertainties, we can define structured perturbations. The dependence on $\gamma$ is linear and thus directly amenable to structured singular value analysis. In what follows, we reserve the symbol $\delta$ for general perturbations, and $\delta_{(\ell)}$ with the subscript in parentheses for perturbations on the parameter $\ell = \Delta, \gamma, J$.

The infidelity, either unperturbed or perturbed, is defined as $1-\mathcal{F}_{u/p}(t) = \tfrac{1}{2}[1+z_{u/p}(t)] = \tfrac{1}{2}[0,0,1]\vec{r}_{u/p}+\frac{1}{2}$, which requires an unusual affine term. The infidelity error, on the other hand, defined as
\begin{equation*}
  \vec{e} = (1-\mathcal{F}_p)-(1-\mathcal{F}_u)
          = \mathcal{F}_u-\mathcal{F}_p=\C_u \vec{z}
\end{equation*}
with $\C_u=\tfrac{1}{2}[0,0,1]$ and $\vec{z}=\vec{r}_p-\vec{r}_u$, does not require the affine term. Consistently with Eqs.~\eqref{eq:z_unperturbed} and~\eqref{eq:z_perturbed}, the infidelity error $\vec{e}$ has two formulations: unperturbed and perturbed.

Considering an uncertainty $\gamma(1+\delta_{(\gamma)})$ on the decoherence rate, the structure of the perturbation is $\S=\gamma \A_\mathfrak{L}$. Numerical results for $\Delta=0$, $J=1$ and $\gamma=0.01$, shown in Fig.~\ref{fig:quantum1-gamma-mussv} suggest good coincidence between the unperturbed and perturbed analyses but there are differences. In the unperturbed case the graph suggests $\mu^u_\infty=1.276>1$; hence, $\|\T^u\|_\infty < 1.272$ for $|\delta|<1/1.276=0.78<1$. The restriction $|\delta|< 1$ is consistent with ${\gamma}_p=\gamma(1+\delta)>0$, that is, positive perturbed decoherence rate, and hence robust closed-loop stability. Closer inspection of the upper bound for $\mu^p$ suggests $\|\T^p\|_\infty \leq \mu^p_\infty \approx 0.5946<1$ for $\omega=2$. This bound is tighter than $\mu^u$, but as already noted $\mu^p$ does not secure closed-loop stability so that the bound $\|\T^p\|_\infty \leq 0.5946$ is only valid for $|\delta|<1$.

\subsection{Beyond structured singular value: fixed-point iteration}

The dependence of $\A$ on $\Delta$ and $J$ is nonlinear, which poses a challenge for structured singular value analysis. This illustrates its limitations and why new tools are needed, especially for quantum control problems, where parameters typically enter in a nonlinear fashion in the Bloch $\A$-matrix. In~\cite{CDC2021_mu}, a fixed-point approach is proposed in lieu of the structured singular value. In essence the perturbation is unstructured and written as $\A_p(\delta)-\A$ rather than $\S\delta$ as in Eqs.~\eqref{eq:r_unperturbed}--\eqref{eq:r_perturbed}. In the unperturbed formulation case, $\|\T^{u}_{\vec{e},\vec{w}_{u}}(\delta)\|_\infty\leq \mu$, $\forall \delta<1/\mu$, where the strict inequality is required to enforce strict closed-loop stability. Interchanging the role of $\mu$ and $\delta$ yields $\|\T^{u}_{\vec{e},\vec{w}_{u}}(\delta)\|_\infty < 1/\delta$, leading to the tightening of the inequality by defining
\begin{equation}\label{eq:delta_max}
  \delta_{\mathrm{max}}=\sup\left\{\delta \geq 0: \|\T^{u}_{\vec{e},\vec{w}_{u}}(\delta)\|_\infty < 1/\delta\right\}.
\end{equation}
Whether the \emph{equality} $\|\T^{u}_{\vec{e},\vec{w}_{u}}(\delta_{\max})\|_\infty = 1/\delta_{\max}$ could be achieved depends on whether the limiting factor is closed-loop stability or the transmission norm. In quantum problems as those considered here,
the uncertain parameters do not affect stability, and hence the limiting factor is the transmission norm. In such cases, \emph{equality} prevails, and $\delta_{\max}$ is a fixed point of $\delta \mapsto 1/\|\T^u_{\vec{e},\vec{w}_{u}}(\delta)\|_\infty$. Note that for the solution to Eq.~\eqref{eq:delta_max} to be a fixed point, the reverse inequality $\|\T^u_{\vec{e},\vec{w}_{u}}(\delta)\|_\infty> 1/\delta$ must hold $\forall \delta>\delta_{\max}$. If there are many such fixed points, the minimum one should be selected. Note that this approach identifies the positive bound; the negative bound $\delta_\mathrm{min}<0$ can be identified in a similar way via the fixed point of $\delta \mapsto -1/\|\T^u_{\vec{e},\vec{w}_{u}}(\delta)\|_\infty$.

In the unperturbed and perturbed cases, the fixed point can be computed by graphing $\|\T^u_{\vec{e},\vec{w}_{u}}(\delta) \|_\infty$ and $1/\delta$ vs.\ $\delta$ and locating the intersection point of the two plots. More formally, under some circumstances, the contraction mapping theorem can be invoked leading to convergence of the recursion $\delta_{k+1}=1/\|\T^u_{\vec{e},\vec{w}_{u}}(\delta_k)\|_\infty$ to the fixed point. Contraction mapping requires existence of a contraction ratio $\zeta \in (0,1)$ such that
\begin{equation*}
  |\delta_{k+2}-\delta_{k+1}|< \zeta |\delta_{k+1}-\delta_k|.
\end{equation*}
That is, simplifying the notation to avoid the clutter,
\begin{equation*}
  \left|\left\|T\left(\|T(\delta)\|^{-1}\right)\right\|^{-1}
  -\|T(\delta)\|^{-1}
  \right| < \zeta \left|\|T(\delta)\|^{-1}-\delta\right|.
\end{equation*}
From there, it is easily verified that, for $m>n$,
\begin{equation*}
  |\delta_m-\delta_n| < \frac{\zeta^m-\zeta^{n-1}}{\zeta-1}|\delta_1-\delta_0|.
\end{equation*}
In other words, $\{\delta_k\}_{k=0}^\infty$ is a Cauchy sequence and, hence, converges.

\subsubsection{Decoherence rate uncertainty}

\begin{figure}
\centering
\includegraphics[width=\columnwidth]{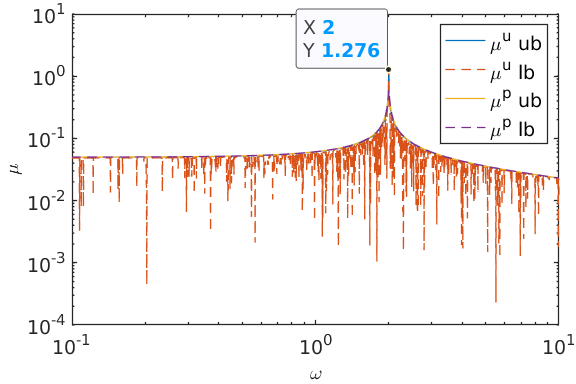}
\caption{Upper (ub) and lower (lb) bounds on unperturbed and perturbed $\mu$'s for two qubits under decoherence rate uncertainty (system parameters $\Delta =0$, $J=1$, $\gamma=0.01$).}\label{fig:quantum1-gamma-mussv}
\end{figure}

\begin{figure*}
\subfloat[Unperturbed Transfer function]{\includegraphics[width=0.32\textwidth]{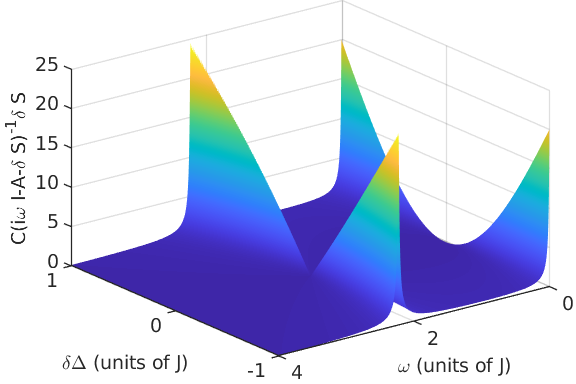}}
\hfill
\subfloat[Perturbed Transfer function]{\includegraphics[width=0.32\textwidth]{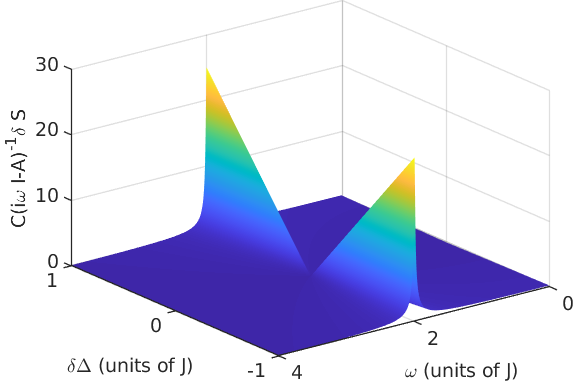}}
\hfill
\subfloat[$\mu$ vs $\delta$]{\includegraphics[width=0.32\textwidth]{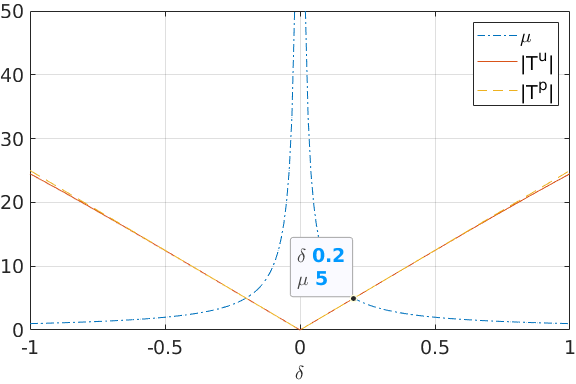}}
\caption{Transfer function relative to unperturbed and perturbed state and $\mu$ for uncertainty in $\Delta=\Delta_0+\delta_{(\Delta)}$ with $\Delta_0=0$, $J=1$, $\gamma=0.01$. The intersection point of $\delta_{(\Delta)} = \mu(\delta_{(\Delta)})$ determines $\delta_{(\Delta),\max}$ and $\mu_{(\Delta),\infty}$.}\label{fig:Quantum1-Delta}
\end{figure*}

\begin{figure*}
\subfloat[Unperturbed transfer function]{\includegraphics[width=0.32\textwidth]{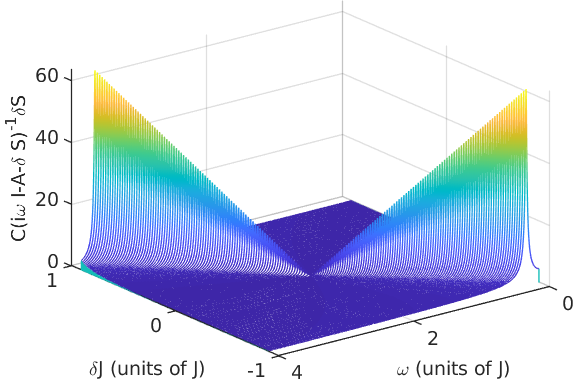}}
\hfill
\subfloat[Perturbed transfer function]{\includegraphics[width=0.32\textwidth]{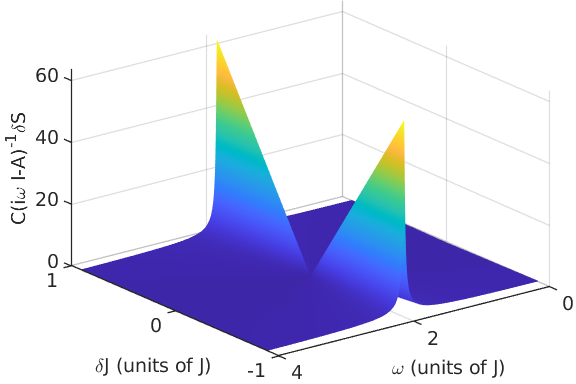}}
\hfill
\subfloat[$\mu$ vs $\delta$]{\includegraphics[width=0.32\textwidth]{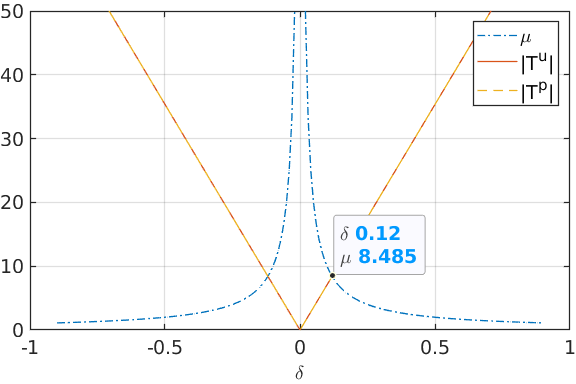}}
\caption{Transfer function relative to unperturbed and perturbed state and $\mu$ for uncertainty in $J=J_0+\delta_{(J)}$ with $J_0=1$, $\Delta=0$, $\gamma=0.01$. The intersection point of $\delta_{(J)} = \mu(\delta_{(J)})$ determines $\delta_{(J),\max}$ and $\mu_{(J),\infty}$.}\label{fig:Quantum1-J}
\end{figure*}

\begin{figure*}
\subfloat[Unperturbed transfer function]{\includegraphics[width=0.32\textwidth]{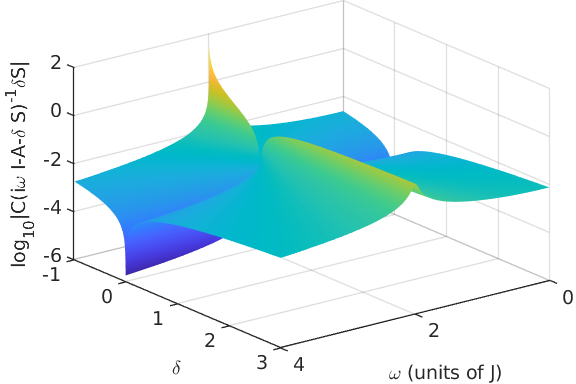}}
\hfill
\subfloat[Perturbed transfer function]{\includegraphics[width=0.32\textwidth]{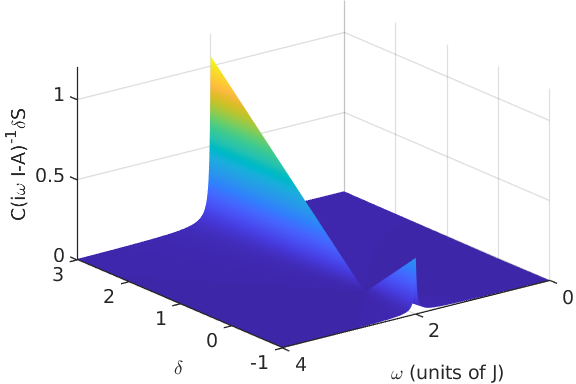}}
\hfill
\subfloat[$\mu$ vs $\delta$]{\includegraphics[width=0.32\textwidth]{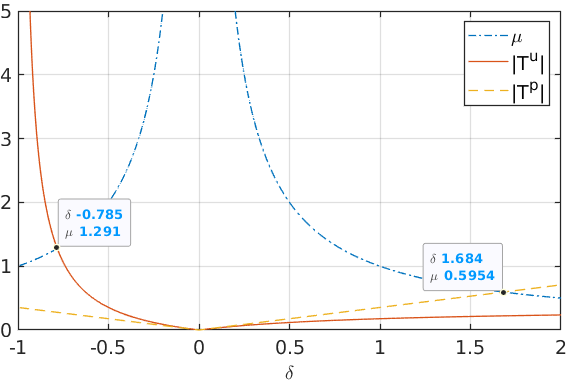}}
\caption{Transfer function relative to unperturbed and perturbed state and $\mu$ for uncertainty in $\gamma=\gamma_0+\delta_{(\gamma)}$ with $\gamma_0=0.01$, $J=1$, $\Delta=0$. The intersection point of $\delta_{(\gamma)} = \mu(\delta_{(\gamma)})$ determines $\delta_{(\gamma), \max}$ and $\mu_{(\gamma),\infty}$.}\label{fig:Quantum1-gamma}
\end{figure*}

Since the case of an uncertain decoherence rate $\gamma$ is the only one where the uncertain parameter enters linearly into the Bloch equation and can be safely computed with classical $\mu$, we will use that case as yardstick to gauge how well the fixed point approach works.
Taking $\Delta=0$, $J=1$, and $\gamma=0.01$ as before yields $\mu^u=0.277$ at $\delta_{\max}= 3.6$ and $\mu^u = 1.282$ at $\delta_{\mathrm{min}}=-0.78$ as shown in Fig.~\ref{fig:Quantum1-gamma}(c), consistent with the upper bound for $\mu_\infty^u$ obtained from the conventional $\mu$-analysis and Fig.~\ref{fig:quantum1-gamma-mussv}.

\begin{table}
\centering
\begin{tabular}{|l|cc|cc|cc|}\hline
  & \multicolumn{2}{c|}{$\Delta = \Delta_0 + \delta_{(\Delta)}$}
    & \multicolumn{2}{c|}{$J=J_0+\delta_{(J)}$}
    & \multicolumn{2}{c|}{$\gamma = \gamma_0(1+\delta_{(\gamma)})$} \\\hline
  & $\delta_{\min}$ & $\delta_{\max}$ & $\delta_{\min}$ & $\delta_{\max}$ & $\delta_{\min}$ & $\delta_{\max}$\\\hline
  $\T^u$ & -0.200 & 0.200 & -0.1194 & 0.1194 & -0.7832 & 3.6114 \\\hline
  $\T^p$ & -0.200 & 0.200 & -0.1189 & 0.1189 & -1.6818 & 1.6818 \\\hline
  $\T^u$ & -0.3452 & 0.3452 & -0.3759 & 0.3759 & -0.7832 & 3.5925\\\hline
  $\T^p$ & -0.6315 & 0.6315 & -0.3760 & 0.3760 & -1.6815 & 1.6815 \\\hline
\end{tabular}
\caption{$\delta_{\min}$ and $\delta_{\max}$ obtained by fixed point iteration for perturbations of $\Delta$, $J$ and $\gamma$ for our quantum example. The nominal (unperturbed) values are $\Delta_0=0$, $J_0=1$ and $\gamma_0=0.01$ for rows one and two; $\Delta_0=0$, $J_0=1$ and $\gamma_0=0.1$ for rows three and four.} \label{table:delta}
\end{table}

\subsubsection{Detuning and $J$-coupling uncertainty}

The advantage of the fixed-point iteration approach is that it can applied to assess the effect of uncertainty in the detuning $\Delta$ or $J$-coupling although these parameters appear non-linearly in the dynamics. Figs.~\ref{fig:Quantum1-Delta} and~\ref{fig:Quantum1-J} show the perturbed and unperturbed transfer functions and $\mu$ for uncertainty in $\Delta$ and $J$, respectively, expanded around $(\Delta,J,\gamma)=(0,1,0.01)$. Although there are differences between the transfer functions relative to the unperturbed and perturbed state, both transfer functions have almost the same norm for a wide range of $\delta$ and give the same $\delta_{\min}$ and $\delta_{\max}$, as illustrated in Table~\ref{table:delta}. However, the critical frequencies of $\T^u$ and $\T^p$ are different, especially for perturbation of $J$, as expected. Comparing the $\mu$ plots suggests that $\mu_\infty=\delta_{\max}^{-1}$ is larger for $J$-coupling uncertainty, which suggests that the system is more sensitive to perturbation of $J$ than $\Delta$. The $\mu_\infty$ values for uncertainty in $\Delta$ and $J$ are also much larger than $\mu_\infty$ for $\gamma$ uncertainty, suggesting that the system is far less affected by uncertainty in the decoherence rates. In practice, this analysis is useful for understanding where the fundamental limitations of real systems lie and where extra care must be placed when designing a quantum system.

\subsection{Quantum vs. mechanical systems}

As already said, a limitation to passivity-control of lightly damped oscillatory mechanical systems is the co-location error between point of application of actuators and point of rate recording. Quantum systems also involve co-location at their core. In the single-particle Hamiltonian $H=\tfrac{1}{2}\omega_q \sigma_z$, the Pauli operator $\sigma_z$ indicates that the qubit is subject to a magnetic field along the $z$-axis and measurements with possible outcomes spin up or down are relative to \emph{exactly} the same axis. Spin chains involve co-location errors, referred to as \emph{bias spillage}~\cite{soneil_mu} along the chain axis, conceptually similar to co-location error in space structures. A highly focused magnetic field meant to address a single spin always entails an error between the bias field and the spin it is supposed to address; a caveat that is well known in passivity control of distributed parameter space structures~\cite{positive_real}.

Besides this simple quantum-classical oscillatory system similarity, 
the two classes of systems differ in a much more profound way. 
Left-half plane pole shifting in classical mechanical systems means that the system dissipates energy, 
a concept fundamental in Lyapunov stability, 
positive-real design (Sec.~\ref{s:positive_real}), even IQCs. 
However, in a quantum system like the present one where dephasing shifts the poles to the left-half plane, 
the system does not dissipate energy, but loses information. This can be quantified by an increase of von Neumann entropy, $-\mbox{Trace}(\rho\log\rho)$, which departs from the quadratic methods of IQCs. 

\section{Conclusion and further prospects}

The key point of this paper is that robust performance under structured uncertainties in both classical and quantum systems can be assessed independently of additive disturbances and/or noises, 
which are sometimes physically legitimate but too often solely motivated by a textbook solution 
to robust performance assessment. 
If adding disturbances is legitimate, then the proposed approach has the advantage of homing in on the effect of the uncertain parameters independently of the disturbances.

Suppression of the disturbance is accomplished by removing the unperturbed dynamics from the perturbed dynamics leading to two different but equivalent error systems: one driven by the unperturbed dynamics, the other by the perturbed dynamics. From a design point of view, the unperturbed forcing term calls into question the relevance of the worst case, or $H^\infty$, approach since the driving term is perfectly known, possibly resuscitating the old geometric disturbance decoupling problem. On the other hand, for the perturbed driven model, the $H^\infty$ approach might still, in spirit, be appropriate since the disturbance is imprecisely known, yet not totally unknown. 
These design questions are left for further research.

Also left for further research is the development of a genuine Brouwer fixed point substitute for the $\mu$-function mandated by nonlinear uncertainties. Last but not least, an information loss approach to quantum dephasing systems that parallels the energy loss approach in classical system is another future challenge.

\bibliographystyle{plain}
\bibliography{biblio/edmond,biblio/physics,biblio/power_grid,biblio/oxford}

\end{document}